\tikzset{
    every node/.style={},           % Default for all nodes
    every circle node/.style={circle, minimum size=4mm, inner sep=0pt} % Specific for circular nodes
}
\colorlet{myred}{red!10}
\colorlet{mycyan}{cyan!10}
\def\DATE{\today}
\def\C{{\mathcal{C}}}
\def\S{{\mathbb{S}}}
\def\bbR{{\mathbb R}}
\def\L{{\mathcal{L}}}
\def\hL{{\widehat{\mathcal{L}}}}
\def\P{{\mathcal{P}}}
\def\Q{{\mathcal{Q}}}
\def\R{{\mathbb{R}}}
\def\W{{\mathcal{W}}}
\def\Hom{{\text{Hom}}}
\def\Ob{{\text{Ob}}}
\def\Sets{{\mathtt{Sets}}}
\def\Top{{\mathtt{Top}}}
\def\kVect{{\mathbb{k}\text{-}\mathtt{Vect}}}
\theoremstyle{definition}
\newtheorem{definition}{Definition}
\newtheorem*{theorem*}{Theorem}
\theoremstyle{plain}
{
\newtheorem{lemma}{Lemma}
\newtheorem{proposition}{Proposition}
\newtheorem{theorem}{Theorem}
}
\theoremstyle{definition}
{
\newtheorem{example}{Example}
\newtheorem{corollary}{Corollary}

}
\author[D. Bashkirov]{Denis Bashkirov} 
\address{Czech Academy of Sciences, Institute of Mathematics, {\v Z}itn{\'a} ,25,
         115 67 Prague 1, Czech Republic}
\title{Planar rooted line arrangements and an operad for factorized scattering}
\begin{document}
\begin{abstract}
We introduce two topological non-$\Sigma$ operad structures on planar line arrangements subject to a certain geometric order condition, ensuring a well-defined notion of particle ordering on a distinguished line. This is interpreted in terms of scattering diagrams in purely elastic (1+1)-dimensional theories. We discuss a possible approach to factorized scattering in operadic terms.
\end{abstract}
\maketitle
\section{Introduction}
Consider a system of $n$ identical classical non-relativistic particles of equal mass $m=1$ on a line, governed by the Hamiltonian 
$
H=\frac{1}{2}\sum\limits_{i=1}^{n}p_i^2+g^2\sum\limits_{i<j}V(q_i-q_j),
$
where $q_i$ and $p_i$ denote the position and momentum of the $i$-th particle respectively, $V$ is a pairwise interaction potential and $g$ is a real coupling constant.
The potential $V=V(q)$ is assumed to be repulsive, symmetric about $q=0$, falling off sufficiently rapidly as $|q|\to +\infty$ and impenetrable, meaning $V(q)\to+\infty$ as $|q|\to 0$.
The key examples of such $V$'s include the rational $V(q)=\frac{1}{q^2}$ and the hyperbolic %$V(q)=\frac{a^2}{4 \sinh^2\left(\frac{ax}{2}\right)}$
$V(q)=\frac{a^2}{\sinh^2\left(aq\right)}$
Calogero-Moser potentials. In the latter case the parameter $a>0$ is the inverse of the characteristic interaction range, and the rational case is restored in the limit $a\to 0$. Both cases are hallmark examples of integrable many-body systems, where integrability can be established by explicitly constructing  a corresponding Lax pair \cite{Moser}\cite{Kulish}\cite[Chapter 3]{Perelomov}. 
This had received further development in Olshanetsky--Perelomov's projection method \cite{OlshanetskyPerelomov} and Kazhdan--Kostant--Sternberg's Hamiltonian reduction \cite{KKS} \cite{Etingof}.
A feature that makes these systems particularly amenable to algebraic and combinatorial analysis is existence of asymptotically-free zones for the particles and a simple asymptotic behavior of solutions. Namely, for $1\leq i\leq n$, the latter are characterized by the asymptotic prescriptions $q_i(t)=p_{i}^{-}t+q_{i}^{-}+o(1)$ for $t\to -\infty$, and $q_i(t)=p_{i}^{+}t+q_{i}^{+}+o(1)$ for $t\to +\infty$ for a set of constants $\{p_{i}^{\pm},q_{i}^{\pm}\}$. Furthermore, by the assumptions on $V$, the particles cannot overtake each other and thus can be indexed in such a way that $q_1(t)<q_2(t)<\dots <q_n(t)$ for all $t$. This implies $p_{1}^{-}>p_{2}^{-}>\dots >p_{n}^{-}$ and 
$p_{1}^{+}<p_{2}^{+}<\dots <p_{n}^{+}$. If, as in the Calogero-Moser case, the Lax matrix can be chosen to admit diagonal asymptotic limits for $t\to\pm \infty$, then the sets of the incoming $\{p_{i}^{-}\}$ and outgoing $\{p_{i}^{+}\}$ momenta can be shown to be equal; in fact, $p_{i}^{-}=p_{n-i+1}^{+}$ for $1\leq i\leq n$.
The $n$ independent integrals of motion can be chosen to be asymptotically symmetric functions of momenta $\sum\limits_{i=1}^{n}p_i^k$ for $1\leq k\leq n$.
One says that such a system supports \emph{purely elastic scattering}. This property is not exclusive to the above model and is known to emerge in other settings as well, including (relativistic) field-theoretic ones, such as soliton scattering in the sine-Gordon model \cite{Zamolodchikov77}\cite{Torielli}, and other quantum integrable systems \cite{Zamolodchikovs},\cite{Dorey} and references therein.

Adhering to the classical Galilean set-up in our exposition, consider an initial ($t=0$) configuration of $n$ particles on a line at positions $q_1<q_2<\dots<q_n$ with incoming asymptotic momenta ${p_1>p_2>\dots>p_n}$. 
The conditions imply that no interaction took place before $t=0$, and each particle will eventually catch up and interact (strictly once) with any other. By equality of the asymptotic momenta sets, every single interaction of two, or more, particles results in a mere exchange of asymptotic momenta between them. 
The evolution of such a system can be geometrically encoded by an arrangement of $n$ lines in a $qt$-plane -- a \emph{scattering diagram}, where the $i$-th line has the slope $p_i$ and passes through the point $t=0$, $q=q_i$.
\begin{center}
    \resizebox{0.5\textwidth}{!}{\begin{tikzpicture}[>={stealth[length=3mm]}]

\def\ball#1#2#3#4#5{
    % color, radius, momentum, label, coordinates
    \draw[->, thick] (#5)  -- ++(#3,0); 
    \draw[fill=#1] (#5) circle (#2); 
    \node at (#5) {#4}; 
}

\coordinate (A1) at (0,0);
\coordinate (A2) at (2,0);
\coordinate (A3) at (3.5,0);
\coordinate (A4) at (6,0);

\coordinate (B1) at (6,4);
\coordinate (B2) at (5,4);
\coordinate (B3) at (2,4);
\coordinate (B4) at (0,4);

% the table
\draw[gray, very thick, ->] (-1,-0.18) -- (7,-0.18);

\foreach \i in {1,2,3,4} 
{
%\draw (A\i) -- (B\i);
\draw ($(A\i)!-.12!(B\i)$) -- ($(B\i)!-0!(A\i)$);
}

\ball{cyan!40}{0.15}{1}{}{A1};
\ball{cyan!40}{0.15}{0.7}{}{A2};
\ball{cyan!40}{0.15}{-0.5}{}{A3};
\ball{cyan!40}{0.15}{-1}{}{A4};

% \ball{cyan!40}{0.13}{0}{}{B1};
% \ball{cyan!40}{0.13}{0}{}{B2};
% \ball{cyan!40}{0.13}{0}{}{B3};
% \ball{cyan!40}{0.13}{0}{}{B4};

\draw [->] (-1,1) -- node[left] {$t$} (-1,2);

\end{tikzpicture}}
    \captionof{figure}{A scattering diagram.}    
\end{center}
Note that while such a line arrangement does contain all the data that determines the corresponding scattering scenario, it does not depict the particles worldlines as such. The lines are the asymptotes, and by our assumptions on the interaction potential, in the vicinity of any point of their intersection, the trajectories of pairwise-interacting particles are disjoint. As an exercise, the reader may trace a possible set of trajectories on the figure above.
\begin{center}
     \includegraphics[width=2.7cm]{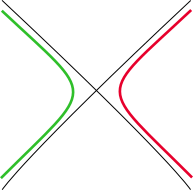}
\end{center}
This is a \emph{reflection} presentation of a scattering process pertaining to our classical picture. A complimentary point of view on combinatorics of particle dynamics is that of the \emph{transmission} presentation. Namely, we may assume that each particle always retains its asymptotic momentum $p_i$, while an interaction amounts to changing the relative order of the particles involved, as if they pass through each other, accompanied by \emph{phase shifts}. 
The latter are discrepancies $\delta_i$ in the asymptotic positions that show up upon identifying the $i$-th incoming particle with the $(n-i+1)$-th outgoing particle in the transmission picture.
Specifically, $\delta_i=q_{n-i+1}^+-q_{i}^-$. Due to $p_{n-i+1}^+=p_i^-(=p_i)$,
$q_i(t)=p_{i}^{-}t+q_{i}^{-}+o(1)$ for $t\to -\infty$ and $q_{n-i+1}(t)=p_{n-i+1}^{+}t+q_{n-i+1}^{+}+o(1)$ for $t\to +\infty$, we have $\delta_i=\lim\limits_{t\to+\infty}[q_{n-i+1}(t)-q_i(-t)-2p_it]$. The values $\delta_1,\dots,\delta_n$ characterize the scattering process and can be thought of as a classical counterpart of a quantum-mechanical $S$-matrix. 
For instance, in the rational Calogero-Moser case all $\delta_i$'s are known to vanish \cite[Chapter 4]{Arutyunov}.
\begin{center}
\begin{tabular}{c c c}
\includegraphics[width=2.6cm]{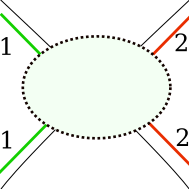}
     &  \quad \quad &
\includegraphics[width=2.6cm]{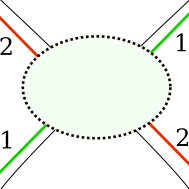}     
\end{tabular}
\captionof{figure}{The reflection and the transmission presentations of an elastic two-body scattering process.}
\end{center}

Integrability of both classical and quantum pure elastic scattering processes is associated with a particular feature that the classical scattering data or the $S$-matrix possess. This is usually stated as the \emph{factorizability} property thereof. Namely, in presence of sufficiently many independent conserved quantities, the complete $S$-matrix accounting for multiparticle collisions can be shown to be representable as a composition of the two-body $S$-matrices. A way to think about it in terms of scattering diagrams is that any such diagram can be perturbed into a line arrangement in a general position, where only pairwise line intersections appear, $\frac{n(n-1)}{2}$ of them in total for $n$ lines. The invariance of a quantum $S$-matrix under such a perturbation is expressed as the Yang-Baxter relation. 
\[
\sum_{a,b,c} S_{ij}^{ba}(\theta_1 - \theta_2) S_{bk}^{nc}(\theta_1 - \theta_3) S_{ac}^{ml}(\theta_2 - \theta_3) =
\sum_{a,b,c} S_{jk}^{bc}(\theta_2 - \theta_3) S_{ic}^{al}(\theta_1 - \theta_3) S_{ab}^{nm}(\theta_1 - \theta_2),
\]
where the spectral parameter $\theta_i$ is identified with the rapidity or momentum of the $i$-th particle; cf. figure \ref{ybconf} below.
Note that in the quantum case the particles may have internal degrees of freedom, thus the $S$-matrix indices above, and come in different flavors. The invariance of the particles count in this setting is not immediate. 

While, effectively, the data of a scattering diagram is that of the initial positions and asymptotic momenta of classical particles, or that of the relative order and asymptotic momenta of quantum ones, the line arrangement interpretation proves itself useful in the following way.

%Tura syrlychtar, Tura syrlychtar, Tura syrlychtar! Tura an parachtar. Ey, pesmelle! Amin.

\subsection{An operadic symmetry}
We would like to take an look at the integrability in $(1+1)$-dimensional scattering through the lens of a certain \emph{operadic} symmetry, meaning a consistency condition with respect to an operation of fusing, or composing, multiple initial particle configurations together.
As before, consider the setting of the classical scattering on a line with initial configurations characterized uniquely by the ordered positions $q_1<\dots<q_n$ and momenta $p_1>\dots>p_n$ of the particles. 
A mere set-theoretic union of two initial configurations does not in general result in a well-defined initial configuration due to a possible violation of the order condition on the momenta or possible particle overlaps. 
As a more refined approach, a composition of two particle configurations $a$ and $b$ can be defined by replacing an interval between two adjacent particles of $a$ by an appropriately rescaled, both in terms of positions and momenta, copy of $b$. Note that such a rescaling procedure can be nonlinear and is not unique, since ordered positions and momenta can always be perturbed. The substitution operation $(-\circ-)$ is manifestly non-commutative and requires an additional parameter $i$ -- the insertion position, or the index of the gap between two adjacent particles of $a$, where $b$ is going to be mapped to.
\begin{figure}[H]
    \centering
    \resizebox{0.6\textwidth}{!}{\begin{tikzpicture}[>={stealth[length=3mm]}]

\def\ball#1#2#3#4#5#6{%
    % color, radius, momentum, label, coordinates
    \draw[->, thick] (#5 + #2, #6 + #2) -- (#5 + #3, #6 + #2);
    \draw[fill=#1] (#5, #6 + #2) circle (#2);
    \node at (#5, #6 + #2) {#4};
    
}

%the trapezoid
\draw[draw=red, dashed, fill=red!10] (1.8, 1.2) -- (2, 0) -- (4.3, 0) -- (4.8, 1.2) -- cycle;

\draw[gray, very thick] (-1,0) -- (7,0);

\ball{cyan!40}{0.2}{1}{}{0}{0}
\ball{cyan!40}{0.2}{0.7}{}{2}{0}
\ball{cyan!40}{0.2}{0.5}{}{4.3}{0}
\ball{cyan!40}{0.2}{-0.5}{}{6}{0}

\draw[gray, very thick] (1,1.2) -- (5.6,1.2);

\ball{green!40}{0.2}{0.6}{}{1.8}{1.2}
\ball{green!40}{0.2}{0.5}{}{3.2}{1.2}
\ball{green!40}{0.2}{0.4}{}{4.8}{1.2}

\node [draw=none, rotate=-90] at (3.2, 0.6) {\Large$\Rightarrow$};

\node [draw=none, rotate=-90] at (3.2, -0.6) {\Large$=$};

\node at (0.8, 1.2) {$b$};
\node at (-1.2, 0) {$a$};

\node at (1, -0.2) {$1$};
\node at (3.15, -0.2) {$2$};
\node at (5.15, -0.2) {$3$};

\draw[gray, very thick] (-1,-1.4) -- (7,-1.4);

\ball{teal}{0.2}{1}{}{0}{-1.4}
\ball{teal}{0.2}{0.7}{}{2}{-1.4}
\ball{teal}{0.2}{0.6}{}{3.2}{-1.4}
\ball{teal}{0.2}{0.5}{}{4.3}{-1.4}
\ball{teal}{0.2}{-0.5}{}{6}{-1.4}

\node at (-1.5, -1.4) {$a\circ_2 b$};

\end{tikzpicture}}
    \caption{Composing two initial particle configurations.}
    \label{CompConf}
\end{figure}
For initial configurations $a$ and $b$ consisting of $m$ and $n$ particles respectively, such an operation is expected to return a valid initial configuration $a\circ_i b$ consisting of $m+n-2$ particles. As noted above, this alone does not determine $a\circ_i b$ uniquely, unless $b$ is a 2-particle configuration. As a possible bootstrap, among all possible ways of defining $(-\circ_i-)$-compositions, we can single out the ones that collectively exhibit a certain ``chronological" symmetry\footnote{One may think of an additional time axis orthogonal to the original spacetime -- the time axis of an agent preparing an initial configuration for us.}.
Namely, a natural condition to impose is independence of an iteratively constructed initial particle configuration on the order in which pairwise {compositions} of three, or more, configurations $a_1,\dots,a_k$ are performed, as long as a chosen relative order of the operands in such a compound composition remains the same. Here, by a relative order on $a_1,\dots,a_k$ we mean a partial order defined by the cover relation $a_i\prec a_j$ iff $a_j$ gets (eventually) substituted into $a_i$. The Hasse diagram of such a poset is a rooted tree, which, upon adding some extra leaves, encodes the corresponding compositional pattern.
\begin{table}[H]
\begin{tabular}{cc}
 \resizebox{10cm}{!}{\begin{tikzpicture}[>={stealth[length=3mm]}]

\def\ball#1#2#3#4#5#6{%
    % color, radius, momentum, label, coordinates
    \draw[fill=#1] (#5, #6 + #2) circle (#2);
    \node at (#5, #6 + #2) {#4};
    \draw[->, thick] (#5 + #2, #6 + #2) -- (#5 + #3, #6 + #2);
}

%trapezoids

\draw[draw=red, dashed, fill=red!10] (1.4, 2.4) -- (1.7, 1.2) -- (3.2, 1.2) -- (5.2, 2.4) -- cycle;

\draw[draw=red, dashed, fill=red!10] (1.8, 1.2) -- (2, 0) -- (4.6, 0) -- (4.8, 1.2) -- cycle;

\draw[gray, very thick] (-1,0) -- (7,0);

\ball{cyan!40}{0.2}{1}{}{0}{0}
\ball{cyan!40}{0.2}{0.7}{}{2}{0}
\ball{cyan!40}{0.2}{0.5}{}{4.6}{0}
\ball{cyan!40}{0.2}{0.4}{}{6}{0}

\draw[gray, very thick] (1,1.2) -- (5.6,1.2);

\ball{green!40}{0.2}{0.6}{}{1.8}{1.2}
\ball{green!40}{0.2}{0.5}{}{3.2}{1.2}
\ball{green!40}{0.2}{0.4}{}{4.8}{1.2}

\node [draw=none, rotate=-90] at (3.2, 0.6) {\Large$\Rightarrow$};

\draw[gray, very thick] (1,2.4) -- (5.6,2.4);

\ball{orange!40}{0.2}{1}{}{1.4}{2.4}
\ball{orange!40}{0.2}{0.5}{}{4}{2.4}
\ball{orange!40}{0.2}{0.4}{}{5.2}{2.4}

\draw[gray, very thick] (-1,-1.2) -- (7,-1.2);

\ball{teal}{0.2}{1}{}{0}{-1.2}
\ball{teal}{0.2}{0.7}{}{2}{-1.2}
\ball{teal}{0.2}{0.55}{}{3}{-1.2}
\ball{teal}{0.2}{0.52}{}{3.8}{-1.2}
\ball{teal}{0.2}{0.5}{}{4.6}{-1.2}
\ball{teal}{0.2}{0.4}{}{6}{-1.2}

\node [draw=none, rotate=-90] at (3.2, 0.6) {\Large$\Rightarrow$};

\node [draw=none, rotate=-90] at (2.6, 1.8) {\Large$\Rightarrow$};

\node at (-2, -1.2) {$a\circ_2(b\circ_1 c)$};
\node at (-1.2, 0) {$a$};
\node at (0.8, 1.2) {$b$};
\node at (0.8, 2.4) {$c$};

% \node at (1, -0.2) {$1$};
% \node at (3.15, -0.2) {$2$};
% \node at (5.15, -0.2) {$3$};

\end{tikzpicture}} &
 \resizebox{3cm}{!}{\begin{tikzpicture}[scale=1, grow=up, level distance=1cm, sibling distance=1.5cm, inner sep=2mm]

\node[draw=none] {}
child 
{
node[circle, draw, fill=cyan!40] {}
child { node {} }
child { 
node [circle, draw, fill=green!40] {}
child {}
child {
node [circle, draw, fill=orange!40] {}
child{}
child{}
}
}
child { node {} }
};

\end{tikzpicture}} \\ 
\end{tabular}

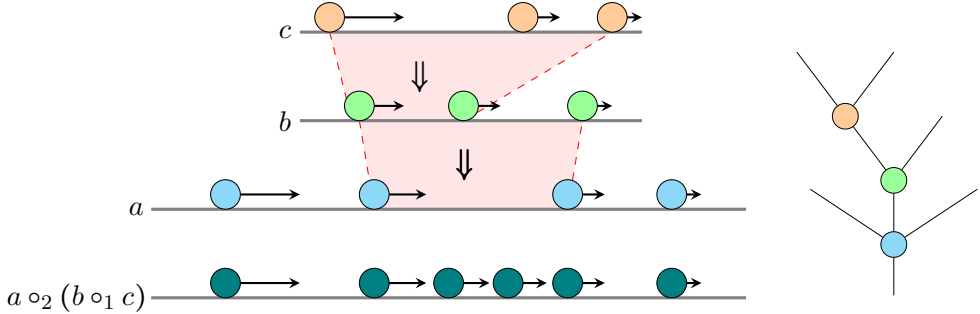
\captionof{figure}{An iterated composition of three particle configurations and the corresponding tree.
Another way to assemble the same configuration is $(a\circ_2 b)\circ_2 c$.}
\end{table}

For three configurations $a$, $b$ and $c$ the discrete ``chronological" invariance condition takes the form
\[
    (a\circ_i b)\circ_j c=a\circ_i(b\circ_{j-i+1}c)   
\]
or
\[
(a\circ_i b)\circ_{j + l - 1} c= (a\circ_j c)\circ_{i} b,
\]
depending on the ranges, where the insertion position indices $i$ and $j$ fall in. In the latter identity, $l+1$ is the total number of particles in $b$.
The invariance conditions for higher compositions can be deduced from here inductively. 
The equations above are a form of generalized associativity formalized by the notion of an operad. 
The affine geometry of line arrangements gets handy when we construct an explicit example of such an operad of particle configurations.
\begin{theorem*}
\emph{
There is a structure of a non-$\Sigma$ operad $\hL$ on scattering diagrams (equivalently, on the initial particle configurations) with binary generators parametrized by \text{rooted line arrangements of rank $3$.}}
\end{theorem*}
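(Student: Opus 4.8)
The plan is to build $\hL$ directly as an operad of planar line arrangements, checking the structure in four stages. First I fix the underlying spaces: an element of $\hL(k)$ is the class, modulo orientation-preserving affine transformations of the $qt$-plane fixing the distinguished (base) line, of a scattering diagram consisting of that base line together with $k+1$ asymptotes of strictly decreasing slope and strictly increasing base-line intercept and satisfying the geometric order condition. With this convention $\hL(1)$ is a single point — the base line plus two ordered asymptotes form a triangle, unique up to affine equivalence — and it will serve as the operadic unit, while $\hL(2)$ is, by construction, the moduli space of rooted line arrangements of rank $3$. Each $\hL(k)$ then carries a topology as a quotient of an open semialgebraic subset of a Euclidean space. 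I would also record the elementary fact that passing to a subset of the asymptotes of an order-admissible arrangement again yields an order-admissible arrangement; this is used twice below.

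Next I define the partial compositions. Between the $i$-th and $(i+1)$-th asymptotes of a representative of $a\in\hL(m)$ the base line cuts out a triangular wedge (bounded, since the two slopes differ), and there is a \emph{unique} orientation-preserving affine map carrying the outer triangle of a representative of $b\in\hL(n)$ — the one bounded by its base line and its two extreme asymptotes — onto that wedge and the base line to the base line. I set $a\circ_i b\in\hL(m+n-1)$ to be the union of the image of $b$ with $a$, the two extreme asymptotes of $b$ being identified with the $i$-th and $(i+1)$-th asymptotes of $a$; this has the correct number of lines, and because the matching map is unique, $\circ_i$ is independent of the chosen representatives and is continuous. The real content of this step is a \emph{stability lemma}: the union again satisfies the geometric order condition, so that $a\circ_i b$ genuinely lies in $\hL$. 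I expect this to be the main obstacle, and it is really two things at once — one must show that the remaining asymptotes of $a$ do not cut through the inserted copy of $b$ in an order-violating way, and that the asymptotes of $b$, once extended to full lines, meet the rest of $a$ admissibly. Equivalently, the work is to pin down the order condition so that it is simultaneously strong enough to be stable under insertion and weak enough to hold for every rank-$3$ rooted arrangement; getting that balance right is the technical heart of the proof.

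The operad axioms are then comparatively formal. The unit laws are immediate from the triangle picture. For associativity, both identities of the introduction — $(a\circ_i b)\circ_j c=a\circ_i(b\circ_{j-i+1}c)$ in the nested range and $(a\circ_i b)\circ_{j+l-1}c=(a\circ_j c)\circ_i b$ in the disjoint range — follow once more from uniqueness of affine maps between triangles: on each side the composite of the two insertion maps that occur equals the single insertion map occurring on the other side, so the two resulting line arrangements literally coincide and hence represent the same class in $\hL$. This reduces the verification to elementary affine geometry together with bookkeeping of the insertion indices.

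Finally, to see that the binary operations generate, I argue by induction on $k$. Given $a\in\hL(k)$ with $k\ge 2$, let $\beta\in\hL(2)$ be the arrangement on the base line and the asymptotes $\ell_1,\ell_k,\ell_{k+1}$ of $a$, and let $a'\in\hL(k-1)$ be the arrangement on the base line and $\ell_1,\dots,\ell_k$; both are order-admissible by the subset remark. The affine matching map for $\beta\circ_1 a'$ is the identity, so $\beta\circ_1 a'$ is literally $a$. Iterating, every operation of $\hL$ is obtained from elements of $\hL(2)$ by the $\circ_i$, and together with the associativity relations this exhibits $\hL$ as a topological non-$\Sigma$ operad generated in arity $2$ by the rooted line arrangements of rank $3$, as claimed.
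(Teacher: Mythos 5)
Your construction is, at its core, the paper's own (its ``Construction II''): the composition $a\circ_i b$ is effected by the unique orientation-preserving affine map carrying the outer triangle of $b$ onto the $i$-th wedge of $a$, closure of the order condition under this operation is the key lemma, associativity comes from uniqueness of the matching maps, and generation in arity two is an induction that peels off one asymptote at a time. Two deviations are worth flagging. First, you quotient by the affine symmetries preserving the directed root line, so your $\hL(1)$ is a point and the operad is unital; the paper deliberately does not quotient. Its $\hL(n)$ consists of actual arrangements (actual positions and momenta), its arity-one component is the four-dimensional affine group $G$, which is listed among the generators in Lemma \ref{YBLemma}, and covariance is handled by first proving the statement for a normalized cross-section $\L$ realized inside a word operad $\W G$ (Lemma \ref{NormLemma}) and then transporting the structure back by the moving-frame Proposition \ref{MovingFrameLemma}. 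Your quotient is isomorphic to the paper's $\L$; it is a legitimate operad, but it retains strictly less structure than what ``scattering diagrams, equivalently initial particle configurations'' is meant to carry. Second, your direct associativity argument (the composite of two insertion maps equals the single insertion map on the other side, using that the outer triangle of $b$ survives insertion of $c$ and that the relevant wedge of $a\circ_i b$ is the image of the corresponding wedge of $b$) is correct and arguably more economical than the paper's detour through $\W G$; your generating decomposition $a=\beta\circ_1 a'$ differs only cosmetically from the paper's removal of the penultimate line.

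The genuine gap is the stability lemma, which you correctly single out as the main obstacle but do not prove, and you even leave open whether the order condition must be re-engineered to make it true. It need not be: the condition of Definition \ref{PRLDef} is already closed under insertion, and the argument (Lemma \ref{LinArrLemma}) is short. The upper-half-plane condition forces the angles $\theta_1<\dots<\theta_{n+1}$ that the asymptotes make with the directed root to be strictly increasing (two rays into the upper half-plane from root points $x_1<x_2$ meet above the root precisely when the right-hand one has the larger angle); in particular every non-extreme line of $b$ has direction strictly between those of its first and last asymptotes. An orientation-preserving affine map taking the directed root to the directed root preserves this betweenness of directions, so every inserted line has angle strictly between $\theta_i$ and $\theta_{i+1}$ of $a$ and root intersection strictly between $P_i$ and $P_{i+1}$. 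Monotonicity of the angles then rules out parallelism with the remaining lines of $a$, the criterion above places every new pairwise intersection in the upper half-plane, and the interleaving of root intersections gives the required ordering. Without this paragraph your $a\circ_i b$ is not known to lie in $\hL$ at all, and everything downstream (associativity, generation) presupposes it; it should be supplied before the formal verifications.
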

% We introduce the latter notion in section \ref{LinArrSec}.
The presence of an operad structure on scattering diagrams prompts to bring up its counterpart at the level of $S$-matrices. Indeed, as we show, the $S$-matrix data of
\begin{align*}
    \left\{
    \begin{array}{c}
    \text{Initial/final}\\
    \text{states}\\
    \end{array}
    \right\} \to 
    \left\{
    \begin{array}{c}
    \text{Transition amplitudes}\\
    \text{(classical phase shifts)}      
    \end{array}
    \right\}
\end{align*}
of purely elastic scattering can be recast as a morphism of operads.
Namely one may think of a $C$-\emph{valued} (1+1)-dimensional purely elastic scattering theory as of an operad morphism $S:\hL \to \P$, where $\hL$ is \emph{an} operad of scattering diagrams and $\P$ is a non-$\Sigma$ operad in a symmetric monoidal category $C$. 
In particular, the $S$-matrix factorization property translates into $S$ being entirely determined by its restriction onto the monoid $\hL(1)$, that is, on two crossing lines arrangements.

% Ks Ks Ps Hrm Hrm
\section{Preliminaries}
\label{PrelimSec}
\subsection{Operads}
We recall the terminology and the basic concepts of operadic algebra. A point of view that will be convenient for us to maintain throughout the paper is that an \emph{operad} $\P$ is a container of combinatorial objects defined inductively by means of iteratively putting together, or \emph{composing}, some atomic pieces -- the \textit{generators} of $\P$. 
Any generator as well as any composite object $a$ of $\P$ is characterized by a positive integer, the \textit{arity}, that measures receptivity of $a$ to forming new compositions with any given element of $\P$. That is, the arity $m$ of $a\in \P$ counts the number of possible composites $a\circ_i b$ for $i=1,\dots,m$ that $a$ can form for any fixed $b\in \P$. 
In many natural cases, the process of building combinatorial objects via iterated compositions exhibits a certain symmetry
-- a consistency condition of associativity that lies at the core of the definition of an operad.
Furthermore, to enhance a merely combinatorial setting, we would like the elements of $\P$ to have an internal structure (for instance, topological) respected by the composition operations.
Moreover, it is desired to have a well-defined notion of a congruence on an operad that would enable one to impose \textit{relations} among the composites.
A common approach to implement all this is to encapsulate all the combinatorial data of $\P$ into a collection of objects and morphisms of a chosen category $C$ that would host all the generators and relations.

A particular implementation of such a structure, which we find most suitable for our purposes, is that of a \emph{non-$\Sigma$}, or a \emph{non-symmetric}, operad \cite{Giraudo}, \cite[Section 5.9]{LodayVallette}, \cite[Chapter 1]{MarklShniderStasheff}.
Descriptively, in a {non-$\Sigma$} operad $\P$, the elements are modeled on planar\footnote{In the literature, the term \emph{plane tree}, referring to a tree equipped with an embedding into an oriented Euclidean plane considered up to an orientation-preserving isotopy, is used interchangeably. 
Combinatorially, this additional data amounts to defining a cyclic order on the set $F_v$ of all half-edges going out of a vertex $v$ for each vertex $v$ of a given tree.
% In the paper, we use the term \emph{planar tree}, as it seems to appear more commonly in the literature on operads.
}
rooted trees, while the operation of forming a composite $a\circ_i b$ corresponds to grafting two planar rooted trees $a, b$ together.
We recall that the grafting operation is defined in the following way. First, by invoking planarity, we enumerate the leaves of all the rooted trees involved. The indexing scheme that we adhere to is defined by traversing any given planar rooted tree in the clockwise caterpillar order, starting at the root, and enumerating all the (non-root) leaves in their order of appearance. Next, upon having all the leaves indexed, the result of \emph{grafting} $b$ onto the $i$-th leaf of $ a $ is defined as the tree $ a \circ_i b $, formed by attaching the root of $ b $ onto the $  $-th leaf of $ a $, while preserving its planar structure, and re-indexing all the leaves of the resulting tree accordingly.
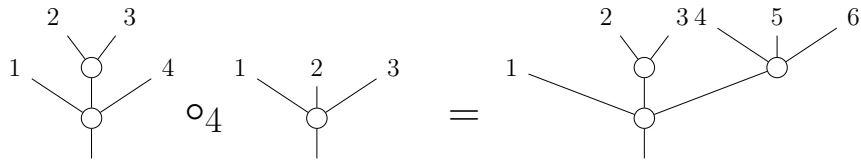
\begin{figure}[H]
    \centering
    \resizebox{0.7\textwidth}{!}{\begin{tikzpicture}[scale=1, grow=up, level distance=1cm, sibling distance=1.5cm, inner sep=2mm]

    % t1 (left)
    \node[draw=none] (A0) {}
    child {node[circle, draw] (A1) {}
        child { node(A5) {\Large 4} }
        child { node[circle, draw] (A2) {}
            child { node (A3) {\Large 3} }
            child { node (A4) {\Large 2} }
        }
        child { node (A6) {\Large 1} }        
    };

    \node at (2.2, 1) {\Huge $\circ_4$};
    
    % t2 (center)
    \node[draw=none, right=4cm of A0] (B0) {}
    child {node [draw, circle] (B1) {} 
        child { node (B2) {\Large 3} }
        child { node (B3) {\Large 2} }
        child { node (B4) {\Large 1} }};
  
    \node at (7.3, 1) {\Huge =};
    
    % t3 (right)
    \node[draw=none, right=6cm of B0] (C0) {}
        [sibling distance=2.6cm]   
        child 
        {node[circle, draw] (C1) {}
        child { node[circle, draw] (C2) {}
            [sibling distance=1.5cm]
            child { node (C3) {\Large 6} }
            child { node (C4) {\Large 5} }
            child { node (C5) {\Large 4} }
        }
        child { node[draw, circle] (C6) {}
            [sibling distance=1.5cm]
            child { node (C7) {\Large 3} }
            child { node (C8) {\Large 2} }
        }
        child { node(C9) {\Large 1} }};

   \end{tikzpicture}}
    \caption{Planar rooted trees grafting.}
\end{figure}
In a sense, a non-$\Sigma$ operad is a structure that we get upon allowing all the tree vertices in the above picture to carry some extra data and allowing trees decorated in this manner to be subject to relations expressible in terms of the vertices data and the grafting operation. This is formalized in the following way.
Let $C$ be a category that we assume to be closed symmetric monoidal, complete and cocomplete.
The usual category $\Sets$ of sets, the category $\Top$ of topological spaces and the category ${\mathbb{k}\text{-}\mathtt{(gr)Vect}}$ of (graded) vector spaces over a field $\mathbb{k}$ are the main use-cases for us. Note that each of these categories is concrete, meaning it comes equipped with a (forgetful) faithful functor $C\to \Sets$ of \emph{elements}. Accordingly, in what follows, we will assume, at times, superfluously, that $C$ is concrete.
\begin{definition}
We say that a collection of objects $\P=\{\P(n)\}_{n\geq 1}$ and morphisms $$(-\circ_i-):\P(m)\otimes \P(n)\to \P(m+n-1)$$ in $C$, defined for all $m,n\geq 1$ and $i=1,\dots,m$ is a \emph{non-$\Sigma$ operad in $C$}, provided that these morphisms, called \emph{partial compositions}, satisfy the following conditions.
For any~$k,l,m\geq 1$, $a\in \P(k)$, $b\in \P(l)$, $c\in\P(m)$, we have
\begin{align}
    \label{assoc1}
    (a\circ_i b)\circ_j c=a\circ_i(b\circ_{j-i+1}c)
\end{align}
for all $1\leq i\leq k$, $i\leq j\leq i+l-1$, and
\begin{align}
    \label{assoc2}
    (a\circ_i b)\circ_{j + l - 1} c= (a\circ_j c)\circ_{i} b
\end{align}
for all $1\leq i<j\leq k$.
\end{definition}
The parameter $n$ in $\P=\{\P(n)\}_{n\geq 1}$ is referred to as the \emph{arity}.
Conditions \eqref{assoc1} and \eqref{assoc2} hold for the operation of planar rooted trees grafting, and are called the sequential and the parallel associativity, respectively.
A \emph{morphism} $f: \P \to \Q$ of two non-$\Sigma$ operads in $C$ is a collection of morphisms $f_n:\P(n)\to \Q(n)$ in $C$, defined for all $n\geq 1$, such that $f_{m+n-1}(a\circ_i b)=f_m(a)\circ_i f_n(b)$ for any $a\in \P(m)$, $b\in P(n)$, $m,n\geq 1$ and $1\leq i\leq m$. The non-$\Sigma$ operads in $C$ and their morphisms form a category $nsOp(C)$. There are natural notions of a \emph{suboperad}, one- and two-sided \emph{ideal}, \emph{free operad} and the \emph{quotient operad} extending the corresponding concepts from the realm of monoids and associative algebras. The reader may consult either of the references cited above for a detailed exposition on these notions, if needed.
The term \emph{non-$\Sigma$}, or \emph{non-symmetric}, is meant to indicated the absence of any equivariance conditions imposed on the partial compositions. A comprehensive account of common equivariance relations emerging for operads can be found in \cite{Yau}.
\begin{example}
Consider planar rooted trees with the simplest topology, the oriented paths. On such trees, grafting reduces to oriented paths concatenation $(-\circ_1-)$, which is a monoidal product thereon, and there are no other partial compositions.
\begin{figure}[H]
    \centering
    \resizebox{0.55\textwidth}{!}{\begin{tikzpicture}[scale=1, grow=right,level distance=1cm, sibling distance=1.5cm, inner sep=2mm]

    % Path P1 (left)
    \node[draw=none] (A0) {}
    child {node[circle, draw] (A1) {}
        child {node[circle, draw] (A2) {}                            
                    child {node[circle, draw] (A3) {}
                        child {node[draw=none] (A4) {}}}}};

    \draw[-{Latex[length=3mm]}] (A1) -- (A0);
    \node[right of=A4] (circ) {\Huge $\circ_1$};

    % Path P2 (center)
    \node[draw=none, right of=circ] (B0) {}
    child {node[circle, draw] (B1) {}
        child {node[circle, draw] (B2) {}
            child {node[draw=none] (B3) {}}}};
    \draw[-{Latex[length=3mm]}] (B1) -- (B0);
        % Equals sign
    \node[right of=B3](eq) {\Huge =};
    
    % Concatenated path P1 \circ P2 (right)
    \node[draw=none, right of=eq] (C0) {}
    child {node[circle, draw] (C1) {}
        child {node[circle, draw] (C2) {}
            child {node[circle, draw] (C3) {}
                child {node[circle, draw] (C4) {}
                    child {node[circle, draw] (C5) {}                                               
                        child {node[draw=none] (C6) {}}}}}}};
    
    \draw[-{Latex[length=3mm]}] (C1) -- (C0);
\end{tikzpicture}}    
\end{figure}
\noindent
More generally, any monoid $A$ in $C$ gives rise to a non-$\Sigma$ operad $\P$ in $C$ \emph{concentrated in arity} $1$. Namely, we set $\P(1):=A$ and $\P(n):=\mathbf{0}$ for $n>1$, where $\mathbf{0}$ is an initial object in $C$. The partial composition $\circ_1:\P(1)\otimes \P(1)\to \P(1)$ is the monoidal product on $A$, while in all other instances, a partial composition $\circ_i:\P(m)\otimes \P(n)\to \P(m+n-1)$ is, necessarily, the identity morphism $id_\mathbf{0}: \mathbf{0} \to \mathbf{0}$. Conversely, given any operad $\P$ in $C$, by \eqref{assoc1}, $\P(1)$ is a semi-group in $C$.
\end{example}

\begin{example}
\label{AsOpEx}
The simplest non-trivial example of a non-$\Sigma$ operad is that of the \emph{associative operad} $As$ in $C$. Namely, for $n\geq 1$, we let $As(n):=\mathbf{1}$, where $\mathbf{1}$ is a (chosen) monoidal unit of $C$. Any partial composition 
$\circ_i:As(m)\otimes As(n)\to As(m+n-1)$
in this operad is the unit multiplication $\mathbf{1}\otimes \mathbf{1} \to \mathbf{1}$ of $C$. 
If a symmetric monoidal category $C$ is \emph{pointed}, meaning $\mathbf{1}$ is a terminal object of $C$, this operad is a terminal object in the category $nsOp(C)$. Thus the qualification the \emph{simplest} given above. The categories $\Sets$ and $\Top$ with their standard symmetric monoidal structures are pointed, while the category ${\mathbb{k}\text{-}\mathtt{Vect}}$ is not.

In terms of trees, $As$ can be described as the operad, where in each arity $n\geq 1$, the component $As(n)$ contains a single tree with $n$ leaves and no decorations. We take such a unique tree representative to be a planar rooted $n$-leaf corolla and define partial compositions as grafting followed by an edge contraction. Identification of two trees upon an edge contraction is an example of a \emph{relation}. As a minimal set of \emph{generators} for this operad we can take a single $2$-leaf corolla, that can be used to produce all higher-arity corollas by iterated self-compositions, and a $1$-leaf corolla that behaves as a neutral element with respect to the $(-\circ_i-)$'s.
\begin{center}
    \resizebox{0.8\textwidth}{!}{\begin{tikzpicture}[scale=1, grow=up, level distance=1.2cm, sibling distance=1.5cm, inner sep=2mm]

    %  t1 (left)
    \node[draw=none] (A0) {}
        child
        { node[circle,draw] (A1) {}
        child { node(A4) {\Large 4} }
        child { node(A4) {\Large 3} }
        child { node(A3) {\Large 2} }
        child { node (A2) {\Large 1} }
        };
  
    \node at (2.2, 1) {\Huge $\circ_2$};
    
    %  t2 (center)
    \node[draw=none, right=4cm of A0] (B0) {}
        child
        { node[circle,draw] (B1) {}
        child { node (B2) {\Large 3} }
        child { node (B3) {\Large 2} }
        child { node (B4) {\Large 1} }};
  
    \node at (7, 1) {\Huge =};
    
    % t3 (right)
    \node[draw=none, right=4cm of B0] (C00) {}
        child
        { node[circle,draw] (C1) {}
        child { node(C6) {\Large 6}}
        child { node(C7) {\Large 5}}
        child { node[circle, draw] (C2) {}
            [sibling distance=1.5cm]
            child { node (C3) {\Large 4} }
            child { node (C4) {\Large 3} }
            child { node (C5) {\Large 2} }        
        }
        child { node(C0) {\Large 1}}};
    \node at (12, 1) {\Huge $\equiv$};
    \node[draw=none,right=5cm of C00] (D0) {}
        child
        { node[circle,draw] (D00) {}
        [sibling distance=1cm]
        child { node(D6) {\Large 6}}
        child { node(D5) {\Large 5}}        
        child { node(D4) {\Large 4}}
        child { node(D3) {\Large 3}}
        child { node(D2) {\Large 2}}
        child { node(D1) {\Large 1}}};
    % Label for t3
  %  \node at (9, -1.5) {\(t_3\)};
    
   \end{tikzpicture}}
    \captionof{figure}{Evaluating a partial composition in $As$.}
\end{center}
\end{example}

\begin{example}
As a minimal enhancement, the associative operad $As$ defined above can be  given a twist.
Let $C$ be cartesian, and let $A$ be a monoid in $C$.
The \emph{semidirect product} $As\rtimes A$ of the associative operad by a monoid $A$ is known as the \emph{word operad} $\W A$. By definition \cite{Wahl}, 
the elements of $As\rtimes A$ can be identified with planar corollas, whose leaves are decorated by elements of $A$, while partial compositions of $As$ are promoted to be homogeneous with respect to these coefficients.
\begin{center}
    \resizebox{0.28\textwidth}{!}{\begin{tikzpicture}[scale=1, grow=up, level distance=1cm, sibling distance=1.5cm, inner sep=2mm]
    \node[draw=none] (D0) {}
        child
        { node[circle,draw] (D00) {}
        [sibling distance=1cm]
        child { node(D5) {\Large $a_5$}}        
        child { node(D4) {\Large $a_4$}}
        child { node(D3) {\Large $a_3$}}
        child { node(D2) {\Large $a_2$}}
        child { node(D1) {\Large $a_1$}}};

   \end{tikzpicture}}
\end{center}

By reading the coefficients off the leaves of a decorated corolla in their canonical order, an element of $\W A(m)$ can be identified with a vector, or a \emph{word}, $(a_1,\dots,a_m)\in A^m$ of elements of $A$, that can be thought of as a decoration that we assign to the unique vertex of a $m$-leaf planar rooted corolla.
In terms of words, the partial compositions take the form
\[
(a_1,\dots,a_m)\circ_i(b_1,\dots,b_n)=
(a_1,\dots, a_{i-1}, a_i\cdot b_1, a_i\cdot b_2,\dots, a_i\cdot b_n,a_{i+1},\dots, a_m).
\]
By virtue of various bijections, multiple classical families of combinatorial objects are know to admit a description in terms of the word operad for different choices of $A$ \cite{Giraudo2, Bashk}.
\end{example}

\begin{example}
For any $V\in Ob(C)$, the corresponding \emph{endomorphism} operad $End_V$ in $C$ is defined by setting ${End_V(n):=\underline{\Hom}_C(V^{\otimes n}, V)}$ for each $n\geq 1$, where $\underline{\Hom}_C(-,-)$ denotes the internal hom in $C$. The partial compositions are given by the substitution
\[
(f\circ_i g)(x_1,\dots,x_{m+n-1})=
f(x_1,\dots, x_{i-1},g(x_i,\dots, x_{i+n-1}),x_{i+n},\dots, x_n).
\]
Given a non-$\Sigma$ operad $\P$ in $C$ and $V\in \Ob(C)$, a \emph{representation of $\P$}, or a \emph{$\P$-algebra structure} on $V$, is an operad morphism $\P \to End_V$ in $nsOp(C)$. For example, upon taking $C=\kVect$, a representation of the operad $As$ of example \ref{AsOpEx} on a $\mathbb{k}$-vector space $V$ is equivalent to the data of an associative $\mathbb{k}$-algebra on $V$. Indeed, by definition, for any $n\geq 1$, $As(n)$ is a $1$-dimensional vector space. Hence, a non-$\Sigma$ operad morphism $As\to End_V$ amounts to picking a single $\mathbb{k}$-linear map $\mu_n:V^{\otimes n} \to V$ for each $n\geq 1$ in such a way that a system of defining relations  $\mu_{m}\circ_i\mu_{n}=\mu_{m+n-1}$ of $As$ holds as an equality of $\mathbb{k}$-linear maps on $V^{\otimes m+n-1}$ for all $m,n\geq 1$ and $1\leq i\leq m$. In particular, $\mu:=\mu_2:V\otimes V\to V$ is a bilinear product subject to the edge-contraction relation
\[
\mu\circ_1 \mu=\mu_3=\mu\circ_2 \mu
\]
expressing the usual associativity condition.
\end{example}

\begin{example}
For $n\geq 1$, let $PRT(n)$ be the set of all (isomorphism classes of) planar rooted trees with $n$ leaves. Tautologically, the tree grafting operation turns the collection $PRT:=\{PRT(n)\}_{n\geq 1}$ into a non-$\Sigma$ operad. For any $k\geq 1$, the subcollection $PRT_k:=\{PRT_k(n)\}_{n\geq 1}$ consisting of $k$-ary planar rooted trees forms a suboperad of $PRT$. Each of these suboperads $PRT_k$ is generated by a single element, the $k$-leaf corolla.
\end{example}

\begin{example}
\label{LDEx}
For $n\geq 1$, let $\mathcal{D}_1(n)$ be the set of all collections of $n$ closed non-trivial subintervals, with pairwise disjoint interiors, embedded into the unit interval $I=[0,1]$. Such a set can be given a natural topology as a subspace of $[0,1]^{2n}$, and subintervals within any configuration $a\in \mathcal{D}_1(n)$ can be enumerated naturally in their order of appearance as $I$ is traced from $0$ to $1$.
For any two subinterval configurations $a\in \mathcal{D}_1(m)$, $b\in \mathcal{D}_1(n)$ and $1\leq i\leq m$, the partial composition $a\circ_i b$ is defined by mapping the unit interval containing configuration $b$ onto the $i$-th subinterval of $a$ via an orientation-preserving affine map.
This results in a configuration of $m+n-1$ subintervals in $[0,1]$ and, in fact,
turns $\mathcal{D}_1=\{\mathcal{D}_1(n)\}_{n\geq 1}$ into an operad in $\Top$, known as the (non-$\Sigma$) operad of \emph{little $1$-disks}.
This operad can be regarded as the topological version of the associative operad~$As$. 
% Indeed, for any $n\geq 1$, any two subinterval configurations in $\mathcal{D}_1(n)$ are homotopy equivalent, allowing one to posit $\pi_0(\mathcal{D}_1)\simeq As$ as non-$\Sigma$ operads in $\Sets$.
Indeed, for each $ n \geq 1 $, any two subinterval configurations in $\mathcal{D}_1(n)$ can be homotopically transformed to each other, and the subintervals enumeration in any partial composition $a\circ_i b$ matches the one from the definition of $As$ in example \ref{AsOpEx}.
This equivalence identifies $\pi_0(\mathcal{D}_1)$ with $ As $ viewed as non-$\Sigma$ operads in $\Sets$.
\begin{figure}[H]
    \centering
    \resizebox{0.5\textwidth}{!}{\begin{tikzpicture}[thick, scale=2]

\begin{scope}
% Draw the unit interval
\draw[black] (0,0) -- (1,0);

% Mark the endpoints of the unit interval
\filldraw[black] (0,0) circle[radius=0.2pt];
\filldraw[black] (1,0) circle[radius=0.2pt]; 

% Subinterval 1
\draw[cyan!40, very thick] (0.1,0) -- (0.22,0) node[midway, above, black, scale=0.3] {1};
\filldraw[blue] (0.1,0) circle[radius=0.2pt];
\filldraw[blue] (0.22,0) circle[radius=0.2pt];

% Subinterval 2
\draw[cyan!40, very thick] (0.4,0) -- (0.6,0)
node[midway, above, black, scale=0.3] {2};
\filldraw[blue] (0.4,0) circle[radius=0.2pt];
\filldraw[blue] (0.6,0) circle[radius=0.2pt];

% Subinterval 3
\draw[cyan!40, very thick] (0.7,0) -- (0.9,0)
node[midway, above, black, scale=0.3] {3};
\filldraw[blue] (0.7,0) circle[radius=0.2pt];
\filldraw[blue] (0.9,0) circle[radius=0.2pt];

\end{scope}

\end{tikzpicture}}
    \caption{An element of $\mathcal{D}_1(3)$.}
\end{figure}
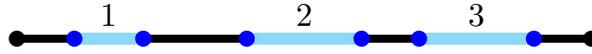
\noindent
To familiarize oneself with partial composition of little disks and the theme of recursion frequently emerging in the context of operads, an interested reader might consider the suboperad of $\mathcal{D}_1$ generated by a certain element of arity~$2$ -- the configuration $[0,1/3]\cup[2/3,1]$. This is a non-$\Sigma$ operad of pre-Cantor sets, isomorphic to $PRT_2$. The Cantor set itself can be used as a left module generator over this non-$\Sigma$ operad in $\Top$.

Another suboperad of $\mathcal{D}_1$ that will be particularly useful for us by virtue of encoding static particle configurations on a line is a special case of a tiling operad that we discuss below.
% \begin{table}[H]
% \begin{tabular}{cc}
%  \resizebox{5cm}{!}{\input{precantor}} &
%  \resizebox{3cm}{!}{\input{precantor_tree}} \\ 
% \end{tabular}
% \captionof{figure}{A pre-Cantor set and the corresponding tree.}
% \end{table}
\end{example}

\section{Operads of tilings and configuration spaces.}
\label{TilingEx}
Self-similar tilings of polyhedra (\textit{reptiles} and \emph{irreptiles} \cite{Reid}\cite{Golomb}), provide natural examples of combinatorial operads. For instance, consider tilings of a L-shaped tromino by homothetically rescaled smaller copies thereof.
\begin{table}[H]
\centering
\begin{tabular}{cc}
\resizebox{2.3cm}{!}{\includegraphics{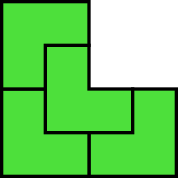}} & 
\resizebox{2.8cm}{!}{\includegraphics{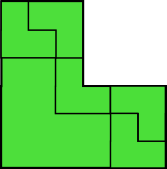}}\\
\end{tabular}
\captionof{figure}{Two self-similar L-shaped tilings.}
\label{lshape}
\end{table}
% \begin{figure}[H]
%     \centering
%     \includegraphics[width=0.15\linewidth]{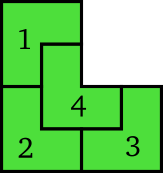}    
% \end{figure}
The arity of a tiling is defined as the number of elementary tiles contained within. A partial composition $a\circ_i b$ of a tiling $a$ of arity $m$ and a tiling $b$ of arity $n$ amounts to substituting a homothetically rescaled, translated and, possibly, rotated copy of $b$ for the $i$-th tile of $a$.  This predicates on the existence of a certain tile-enumerating scheme that allows one to address each tile by its unique index $i$. 
We define such a scheme inductively in the following way.
For any tile configuration that is indecomposable in the sense of the $\circ_i$-products, the enumeration of the tiles is a choice to be made (in general, this could be subject to a consistency condition across the indecomposables if there are relations between them). Next, for tilings $a$, $b$ of arities $m$ and $n$ respectively, and any $1\leq i\leq m$, all the tiles of $a$ with indicies $j<i$ retain their respective indicies in $a\circ_i b$. Any tile of $a$ of index $j>i$ changes it to $j':=j+n-1$. 
All the tiles of $b$, originally indexed $1,2,\dots,n$, acquire indicies $i, i+1,\dots, i+n-1$ respectively in $a\circ_i b$; cf. leaves enumeration in $a\circ_i b$ in example \ref{AsOpEx}.
% In general, the choices made across all the indecomposables should be consistent.

A tiling pattern shown on the left of figure \ref{lshape} (\emph{Golomb's configuration}), can be used as a single generator of arity four for a family of self-similar tromino tilings forming a non-$\Sigma$ operad (in $\Sets$).
The enumeration of four tiles within the generator is arbitrary. The enumeration of tiles in all composite configurations follows the scheme described above.
The non-$\Sigma$ operad generated in this manner is isomorphic to the operad of $4$-ary planar rooted trees $PRT_4$. A concrete isomorphism depends on the choice of the tiles enumeration in the generator.
The figure below depicts such a correspondence based on the enumeration of the generator tiles as shown on the right.
\begin{table}[H]
\begin{tabular}{ccc}
 \resizebox{3.5cm}{!}{\includegraphics{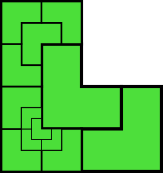}} &
 \resizebox{5.5cm}{!}{\begin{tikzpicture}[scale=1, grow=up, level distance=1.2cm, sibling distance=2cm, inner sep=2mm,
level 3/.style={sibling distance=0.6cm}]

\node[draw=none] {}
child{
    node[draw, circle] {}    
    child {}
    child {}
    child {
        node[draw, circle] {}
        child {
            node[draw, circle] {}
            child {}
            child {}
            child {}
            child {}
        }
        child {}
        child {}
        child {}
    }
    child {
        node[draw, circle] {}
        child {}
        child {}
        child {}
        child {}
    }
};

\end{tikzpicture}} &
 \resizebox{2cm}{!}{\includegraphics{golomb}}
\end{tabular}
\captionof{figure}{A tiling generated by Golomb's configuration and the corresponding tree. On the right: the choice of the generator that induces the corresponding operadic bijection.}
\end{table}
\noindent
Note that this is a proper suboperad of the operad of all possible self-similar tromino tilings. For instance, the tiling of arity $6$ shown on the right of figure \ref{lshape}\cite{Khovanova} is not a composite of Golomb's configurations, and is, in fact, $\circ_i$-indecomposable. 

As a geometric model, a self-similar tiling operad can be presented as a suboperad of the word operad $\W G$, where $G\leq Aut(X)$ is the group of the admissible tile transformations and $X$ is the corresponding ambient space. For the above example of trominos embedded in a Euclidean plane, $G$ can be taken to be the orientation-preserving affine similarity group -- the subgroup of the affine group 
$Aff(\R^2)$ restricted to scaling by $s>0$, translations and rotations.
To this end, we fix a reference copy of a basic tile $T$ in $X$, such as a single tromino in a plane as in the above example. We will assume that $T$ has no non-trivial automorphisms in $G$, by putting some framing on $T$, if needed. 
Then any tiling of $T$ consisting of pieces $T_1,T_2,\dots,T_n$ can be uniquely encoded by the word $(g_1,g_2,\dots,g_n)\in G^n$, where each $g_i$ satisfies $g_iT=T_i$. 
These are subject to $\bigcup\limits_{i=1}^{n} T_i=T$ and $int(T_i)\cap int(T_j)=\varnothing$ for $i\neq j$. Note that the tiles are enumerated, all $g_i$'s are distinct, and $|\S_n|$ ways of relabeling the tiles correspond to all the letter permutations within the word.
\begin{lemma}
\label{TilingOpProp}
    The family $\mathcal{T}=\{T(n)\}_{n\geq1}$, where $T(n)$ consists of all tiling words $(g_1,g_2,\dots,g_n)\in G^n$ as above, is a non-$\Sigma$ suboperad $\W G$. of  It is (non-canonically) isomorphic to the operad of labeled self-similar tilings of $T$.
\end{lemma}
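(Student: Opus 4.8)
The plan is to realize $\mathcal{T}$ inside the word operad $\W G$ and check that it is closed under the partial compositions, and then to set up the bijection with geometric tilings. I would start by recalling that $\W G(n)=G^n$, with $(g_1,\dots,g_m)\circ_i(h_1,\dots,h_n)=(g_1,\dots,g_{i-1},g_ih_1,\dots,g_ih_n,g_{i+1},\dots,g_m)$, and noting that $T(n)\subseteq G^n$ by definition; so the suboperad claim reduces to showing that $T(n)$ is stable under these maps, the associativity relations \eqref{assoc1} and \eqref{assoc2} being then automatically inherited from $\W G$. One also sees that $T(1)=\{(e)\}$, since a tile $g_1T$ equal to $T$ forces $g_1=e$ by the framing assumption on $T$, so $\mathcal{T}$ carries the expected unit.

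For stability, I would take $a=(g_1,\dots,g_m)\in T(m)$, $b=(h_1,\dots,h_n)\in T(n)$, and show that the word $c=a\circ_i b$, whose associated pieces are $g_jT$ for $j\ne i$ and $g_ih_kT$ for $1\le k\le n$, is again a tiling word. For the covering condition, $\bigcup_k h_kT=T$ gives $\bigcup_k g_ih_kT=g_iT$, so the union of the pieces of $c$ equals $\bigcup_j g_jT=T$. For disjointness of interiors — recalling that every element of $G\le\mathrm{Aut}(X)$ is a homeomorphism and hence commutes with $\mathrm{int}(-)$ — there are three cases: two pieces $g_jT$, $g_{j'}T$ with $j,j'\ne i$ distinct come from $a$; two pieces $g_ih_kT$, $g_ih_{k'}T$ with $k\ne k'$ have interiors $g_i(\mathrm{int}(h_kT))$ and $g_i(\mathrm{int}(h_{k'}T))$, disjoint because they come from $b$; and for $j\ne i$, since $h_kT\subseteq T$ one has $g_ih_kT\subseteq g_iT$, so $\mathrm{int}(g_ih_kT)\subseteq\mathrm{int}(g_iT)$ is disjoint from $\mathrm{int}(g_jT)$. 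In particular the letters of $c$ are pairwise distinct, so $c\in T(m+n-1)$ and $\mathcal{T}$ is a non-$\Sigma$ suboperad of $\W G$.

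For the second assertion, I would define the map sending $(g_1,\dots,g_n)\in T(n)$ to the tiling of $T$ whose $i$-th tile is $g_iT$. This is injective because $T$ carries no non-trivial $G$-automorphism, so $g_i$ is recovered from the tile $g_iT$; and it is surjective onto the labeled tilings of $T$ by homothetic copies, since any such tile is a $G$-image of $T$. Comparing the word-operad composition formula above with the tile substitution-and-reindexing rule described earlier in this section — indices $<i$ fixed, indices $>i$ shifted by $n-1$, and the tiles of $b$ rescaled by $g_i$ and assigned indices $i,\dots,i+n-1$ — shows that this bijection commutes with the partial compositions, hence is an isomorphism of non-$\Sigma$ operads. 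It is non-canonical only in that it depends on the choice of a reference framed copy $T\subseteq X$ and of the enumeration of tiles inside each $\circ_i$-indecomposable configuration.

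The only genuinely delicate point will be the ``cross'' case of the interior-disjointness verification, where one must use both that $g_i$ is a homeomorphism and that each piece $h_kT$ of $b$ is contained in $T$; together with a clean use of the no-non-trivial-automorphism hypothesis to make the tiling-to-word correspondence well defined and bijective. Everything else is bookkeeping with the defining formula of $\W G$.
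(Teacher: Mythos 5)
Your proposal is correct and follows essentially the same route as the paper: realize the tiling words inside the word operad $\W G$, verify closure under the partial compositions by translating the covering relation $\bigcup_k h_kT=T$ by $g_i$ and invoking that $G$ acts by homeomorphisms, and then match the composition formula with the tile substitution-and-reindexing rule. You simply spell out the interior-disjointness cases and the word-to-tiling bijection in more detail than the paper's very terse argument does.
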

\begin{proof}
For any ${(g_1,\dots,g_m)\in\mathcal{T}(m)}$,
${(h_1,\dots,h_n)\in\mathcal{T}(n)}$ and $1\leq i\leq m$,
we have
\begin{align}
\label{WGcomp}
(g_1,\dots,g_m)\circ_i(h_1,\dots,h_n)=(g_1,\dots,g_{i-1},g_ih_1,\dots g_ih_n,g_{i+1},\dots, g_m).
\end{align}
Multiplying $\bigcup\limits_{i=1}^{n} h_i T=T$ on the left by $g_i$ we deduce that the subword $g_{i-1},g_ih_1,\dots g_ih_n$ yields a valid tiling of $g_i T$. Thus, \eqref{WGcomp} encodes a valid tiling of $T$. All the topological conditions are met due to $G$ acting homeomorphically on $X$.
\end{proof}
Non-canonicity is due to making a choice of a template tile $T\subset X$. For any other choice $gT$, the tiling operad is $g\mathcal{T}g^{-1}=\{gT(n)g^{-1}\}_{n\geq1}$, where $G$ acts on each $T(n)$ diagonally.

\subsection{Operads of static particle configurations}
Consider a particular one-dimensional instance of the self-similar tiling operad. Namely, let
$\C$ be the operad of tilings of the closed unit interval $I=[0,1]$ by subintervals.
Equivalently, $\mathcal{C}$ is the suboperad of the little $1$-disks operad $\mathcal{D}_1$ of example \ref{LDEx} that consists of all maximally stretched-out subinterval configurations in $[0,1]$. These are collections of closed non-degenerate subintervals $I_1,\dots,I_n\subseteq I$ with pairwise disjoint interiors, such that $\bigcup\limits_{i=1}^{n} I_i=I$.
% Any configuration of $n$ subintervals from $\mathcal{C}_1(n)$ is determined by $n-1$ distinct points in $(0,1)$, where pairwise-adjacent intervals meet.
As a topological space, $\mathcal{C}(n)$ can be identified with the configuration space $UConf_{n-1}(0,1)$ of $n-1$ unlabeled points in $(0,1)$, where pairwise-adjacent intervals $I_i, I_{i+1}$ meet.
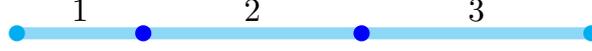
\begin{figure}[H]
    \centering
    \resizebox{0.5\textwidth}{!}{\begin{tikzpicture}[thick, scale=2]

\begin{scope}
% Draw the unit interval
\draw[black] (0,0) -- (1,0);

% Subinterval 1
\draw[cyan!40, very thick] (0,0) -- (0.22,0) node[midway, above, black, scale=0.3] {1};

% Subinterval 2
\draw[cyan!40, very thick] (0.22,0) -- (0.6,0)
node[midway, above, black, scale=0.3] {2};

% Subinterval 3
\draw[cyan!40, very thick] (0.6,0) -- (1,0)
node[midway, above, black, scale=0.3] {3};

\filldraw[cyan] (0,0) circle[radius=0.2pt];
\filldraw[cyan] (1,0) circle[radius=0.2pt]; 
\filldraw[blue] (0.22,0) circle[radius=0.2pt];
\filldraw[blue] (0.6,0) circle[radius=0.2pt];

\end{scope}

\end{tikzpicture}}
    \caption{An element of $UConf_2(0,1)\simeq \mathcal{C}(3)$.}
\end{figure}
\noindent
The family of configuration spaces $UConf(0,1) = \{UConf_{n-1}(0,1)\}_{n \geq 1} $ acquires the structure of a non-$\Sigma$ operad as a suboperad of the little $1$-disks operad through a bijective correspondence with $\mathcal{C}$.
\begin{figure}[H]
    \centering
    \resizebox{0.85\textwidth}{!}{\begin{tikzpicture}[thick, scale=2]

\begin{scope}
\draw[black] (0,0) -- (1,0);

% Subinterval 1
\draw[cyan!40, very thick] (0,0) -- (0.22,0) node[midway, above, black, scale=0.3] {1};

% Subinterval 2
\draw[cyan!40, very thick] (0.22,0) -- (0.6,0)
node[midway, above, black, scale=0.3] {2};

% Subinterval 3
\draw[cyan!40, very thick] (0.6,0) -- (1,0)
node[midway, above, black, scale=0.3] {3};

\filldraw[cyan] (0,0) circle[radius=0.2pt];
\filldraw[cyan] (1,0) circle[radius=0.2pt]; 
\filldraw[blue] (0.22,0) circle[radius=0.2pt];
\filldraw[blue] (0.6,0) circle[radius=0.2pt];

\end{scope}

\begin{scope}[shift={(1.5,0)}]
\draw[black] (0,0) -- (1,0);

% Subinterval 1
\draw[cyan!40, very thick] (0,0) -- (0.2,0) node[midway, above, black, scale=0.3] {1};

% Subinterval 2
\draw[cyan!40, very thick] (0.2,0) -- (1,0)
node[midway, above, black, scale=0.3] {2};

\filldraw[cyan] (0,0) circle[radius=0.2pt];
\filldraw[blue] (0.2,0) circle[radius=0.2pt];
\filldraw[cyan] (1,0) circle[radius=0.2pt];
\end{scope}

\begin{scope}[shift={(0.75,-0.3)}]
\draw[black] (0,0) -- (1,0);

% Subinterval 1
\draw[cyan!40, very thick] (0,0) -- (0.22,0) node[midway, above, black, scale=0.3] {1};

% Subinterval 2
\draw[cyan!40, very thick] (0.22,0) -- (0.296,0)
node[midway, above, black, scale=0.3] {2};

% Subinterval 3
\draw[cyan!40, very thick] (0.296,0) -- (0.6,0)
node[midway, above, black, scale=0.3] {3};

% Subinterval 4
\draw[cyan!40, very thick] (0.6,0) -- (1,0)
node[midway, above, black, scale=0.3] {4};

\filldraw[cyan] (0,0) circle[radius=0.2pt];
\filldraw[cyan] (1,0) circle[radius=0.2pt]; 
\filldraw[blue] (0.22,0) circle[radius=0.2pt];
\filldraw[blue] (0.296,0) circle[radius=0.2pt];
\filldraw[blue] (0.6,0) circle[radius=0.2pt];

\end{scope}

\node[scale=0.5] at (1.25,0) {$\circ_2$};
\node[scale=0.5] at (1.25,-0.12) {$=$};

\end{tikzpicture}}
    \caption{Evaluating a partial composition in $\mathcal{C}$.}
\end{figure}
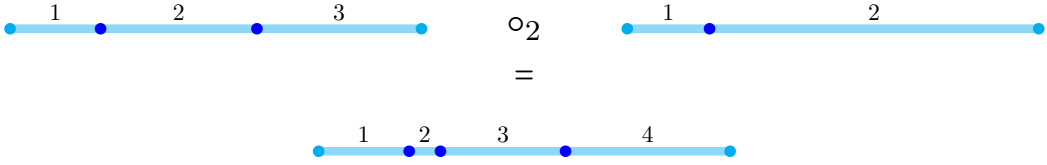
\noindent
Note that a configuration of $n-1$ points in $(0,1)$ is an element of arity $n$ of this operad.
Furthermore, despite being unlabeled, the points come with a canonical enumeration induced by the usual order on $[0,1]$.
The lengths $\lambda_1,\dots,\lambda_n$ of the consecutive tiling intervals $I_1,\dots,I_n$ define a coordinate system on $UConf_{n-1}(0,1)$, which identifies the latter with the interior $int(\Delta^{n-1})$ of the standard geometric $(n-1)$-simplex. In fact, it identifies $UConf_{n-1}(0,1)$ with a suboperad the word operad $\W[0,1]$, where $[0,1]$ with its multiplicative structure is treated as a topological  monoid. The partially-defined addition on $[0,1]$ allows one to cut $UConf_{n-1}(0,1)$ out of the cube $\W[0,1](n):=[0,1]^n$ by means of an affine equation $\lambda_1+\dots+\lambda_n=1$ with $\lambda_i>0$ for all $1\leq i\leq n$. 
Non-$\Sigma$ operad $\C$ admits a family of binary generators $P_\lambda$, $0<\lambda<1$. In terms of interval tilings, $P_\lambda$'s are two-interval configurations of the form $[0,\lambda]\cup[\lambda, 1]$, or equivalently, a point $0<\lambda<1$ in $UConf_1(0,1)$.
These generators are subject to the barycentric relations 
\begin{align}
\label{Bary}
P_{\alpha+\beta}\circ_1 P_{\frac{\alpha}{\alpha+\beta}}=
P_{\alpha}\circ_2 P_{\frac{\beta}{1-\alpha}}
\end{align}
for any $\alpha,\beta>0$, $\alpha+\beta<1$. Any convex set $X\subset \R^N$ is a $\C$-algebra in $\Top$.

For our purposes, we need an operad structure for particle configurations on the entire line $\R$. One way to get it is to transfer an operad structure from $UConf(0,1)$ to $UConf(\R)$ along a homeomorphism $(0,1)\simeq \R$. On the other hand, a more natural approach that circumvents having to choose such a homeomorphism is to use the specifics of the line geometry. Namely, since any finite point configuration $Q$ in $\R$ is contained in an interval $[Q_1,Q_n]$, where $Q_1$ and $Q_n$ are the well-defined leftmost and rightmost points of $Q$ respectively, there is a unique orientation-preserving affine map sending $[Q_1,Q_n]$ to any interval $[P_1,P_m]$ by mapping $Q_1$ to $P_1$ and $Q_n$ to $P_m$; cf. figure \ref{CompConf}.

Proposition \ref{TilingOpProp} provides a way to formalize it. This model serves as a precursor to the main construction of section \ref{LinArrSec}.
To this end, let $G$ be the orientation-preserving affine group $Aff_+(\R^1)$.
Then a tiling of $I=[0,1]$ by subintervals $I_1,\dots,I_n$ of lengths $\lambda_1,\dots,\lambda_n$ can be identified with the tuple
\begin{align}
\label{AffWord}
\left(
\left[
\begin{array}{cc}
\lambda_1 & b_1 \\
0   &  1
\end{array}
\right],\dots,
\left[
\begin{array}{cc}
\lambda_n & b_n \\
0   &  1
\end{array}
\right]
\right)\in G^n
\end{align}
subject to the boundary matching and the normalization conditions that read
\begin{align}
    \label{MatchCond}
    \left[
    \begin{array}{cc}
    \lambda_i & b_i \\
    0   &  1
    \end{array}
    \right]
    \left[
    \begin{array}{c}
     1\\
     1 
    \end{array}
    \right]=
    \left[
    \begin{array}{cc}
    \lambda_{i+1} & b_{i+1} \\
    0   &  1
    \end{array}
    \right]
    \left[
    \begin{array}{c}
     0\\
     1 
\end{array}
\right],\quad 1\leq i\leq n-1
\end{align}
and
\begin{align}
    \label{NormCond}
    \left[
    \begin{array}{cc}
    \lambda_{1} & b_{1} \\
    0   &  1
    \end{array}
    \right]
    \left[
    \begin{array}{c}
     0\\
     1 
    \end{array}
    \right]=
    \left[
    \begin{array}{c}
     0\\
     1 
    \end{array}
    \right], \quad 
    \left[
    \begin{array}{cc}
    \lambda_{n} & b_{n} \\
    0   &  1
    \end{array}
    \right]
    \left[
    \begin{array}{c}
     1\\
     1 
    \end{array}
    \right]=
    \left[
    \begin{array}{c}
     1\\
     1 
    \end{array}
    \right]
\end{align}
respectively.
Lemma \ref{TilingOpProp} identifies $\mathcal{C}$
with the suboperad of $\W G$ consisting of all such tuples.

Upon omitting the normalization condition \eqref{NormCond},
a tuple \eqref{AffWord} can be identified with a 
tiling of an arbitrary interval $[b_1,b_n+\lambda_n]\subset \R$ by $n$ subintervals $[b_1,b_2],\dots,[b_{n-1},b_n],[b_n,b_n+\lambda_n]$. Equivalently, this is a configuration of $n+1$ points $b_1,b_2,\dots,b_n,b_n+\lambda_n$ in $\R$.
At the same time, the partial compositions $(-\circ_i-)$ of the word operad $\W G$ no longer descent onto the set of such tuples. Indeed, as an example, both
$
a=
\left(
\left[
    \begin{array}{cc}
    1 & 0 \\
    0   &  1
    \end{array}
    \right],
   \left[
    \begin{array}{cc}
    1 & 1 \\
    0   &  1
    \end{array}
    \right]
\right)
$ 
and
$
b=
\left(
\left[
    \begin{array}{cc}
    1 & 1 \\
    0   &  1
    \end{array}
    \right],
   \left[
    \begin{array}{cc}
    1 & 2 \\
    0   &  1
    \end{array}
    \right]
\right)
$ 
satisfy the matching condition, but do not satisfy the normalization one. Upon computing
$a\circ_1 b=\left(
\left[
    \begin{array}{cc}
    1 & 1 \\
    0   &  1
    \end{array}
    \right],
   \left[
    \begin{array}{cc}
    1 & 2 \\
    0   &  1
    \end{array}
    \right],
 \left[
    \begin{array}{cc}
    1 & 1 \\
    0   &  1
    \end{array}
    \right]
\right)$
we find out that it is not a valid tiling word. Indeed, the matching condition for the $2$nd and $3$rd entries does not hold (also, can be seen from having two equal entries).

The partial compositions are to be modified, and
this can be done by employing a variation of the moving frame method \cite{Olver}.
\begin{proposition}
    \label{MovingFrameLemma}
    Let $\Q=\{\Q(n)\}_{n\geq 1}$ be a collection of objects of a category $C$ with a group $G$ acting on each component $\Q(n)$, $\rho=\{\rho_n:\Q(n)\to G\}$
    be a family of equivariant maps, $\rho_n(g\cdot z)=g\cdot\rho_n(z)$ for any $n\geq 1$, $z\in \Q(n)$, $g\in G$.    
    For $n\geq 1$, let $\P(n):=\{z\in \Q(n)|\rho(z)=e\}$
    If $\P=\{\P(n)\}_{n\geq 1}$ 
    has the structure of a non-$\Sigma$ operad, this structure can be lifted to
    one on $\Q$. Namely, for $a\in\Q(m)$, $b\in \Q(n)$, $1\leq i\leq m$, we set
    \begin{align}
    \label{MovingFrameComp}
    a *_i b := \rho(a)[\rho(a)^{-1}a\circ_i \rho(b)^{-1}b].
    \end{align}    
\end{proposition}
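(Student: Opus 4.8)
The plan is to check, working with underlying elements (licensed by the standing assumption that $C$ is concrete; that the operations below are morphisms of $C$ and not merely maps of sets is a routine consequence of their being assembled from the structure morphisms of $\Q$, $G$, $\rho$ and the $\circ_i$ of $\P$, and I omit it), that the operations $*_i$ of \eqref{MovingFrameComp} are well defined and satisfy \eqref{assoc1} and \eqref{assoc2}. For $z\in\Q(n)$ abbreviate $\bar z:=\rho(z)^{-1}\cdot z$. Equivariance gives $\rho(\bar z)=\rho(z)^{-1}\rho(z)=e$, so $\bar z\in\P(n)$; moreover $z=\rho(z)\cdot\bar z$, and $\bar z=z$ whenever $z\in\P(n)$ already. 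Thus in \eqref{MovingFrameComp} the term $\bar a\circ_i\bar b$ is the $\P$-composite of $\bar a\in\P(m)$ and $\bar b\in\P(n)$, which lies in $\P(m+n-1)$, and hence $a*_i b=\rho(a)\cdot(\bar a\circ_i\bar b)\in\Q(m+n-1)$ is well defined. (Conceptually, $z\mapsto(\rho(z),\bar z)$ exhibits $\Q(n)$ as the trivial principal bundle $G\times\P(n)$ over $\P(n)$, and \eqref{MovingFrameComp} is exactly the induced operation on sections; this is the ``moving frame'' of the statement's name.)

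The one computation that matters is the invariance identity
\[
\rho(a*_i b)=\rho(a),\qquad a\in\Q(m),\ b\in\Q(n),\ 1\le i\le m.
\]
Since $\bar a\circ_i\bar b\in\P(m+n-1)$ we have $\rho(\bar a\circ_i\bar b)=e$, so equivariance gives $\rho(a*_i b)=\rho\big(\rho(a)\cdot(\bar a\circ_i\bar b)\big)=\rho(a)\cdot\rho(\bar a\circ_i\bar b)=\rho(a)$. Consequently $\overline{a*_i b}=\rho(a)^{-1}\cdot\bigl(\rho(a)\cdot(\bar a\circ_i\bar b)\bigr)=\bar a\circ_i\bar b$: the moving-frame correction of a composite is the $\P$-composite of the corrections of the factors.

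With this identity in hand, both axioms collapse onto the corresponding axioms of $\P$. For \eqref{assoc1}, with $a\in\Q(k)$, $b\in\Q(l)$, $c\in\Q(m)$, $1\le i\le k$ and $i\le j\le i+l-1$,
\begin{align*}
(a*_i b)*_j c&=\rho(a*_i b)\cdot\big(\overline{a*_i b}\circ_j\bar c\big)=\rho(a)\cdot\big((\bar a\circ_i\bar b)\circ_j\bar c\big),\\
a*_i(b*_{j-i+1}c)&=\rho(a)\cdot\big(\bar a\circ_i\overline{b*_{j-i+1}c}\big)=\rho(a)\cdot\big(\bar a\circ_i(\bar b\circ_{j-i+1}\bar c)\big),
\end{align*}
and the right-hand sides coincide by \eqref{assoc1} applied in $\P$. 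For \eqref{assoc2}, with $1\le i<j\le k$ and $l$ the arity of $b$, the same two steps give
\[
(a*_i b)*_{j+l-1}c=\rho(a)\cdot\big((\bar a\circ_i\bar b)\circ_{j+l-1}\bar c\big)=\rho(a)\cdot\big((\bar a\circ_j\bar c)\circ_i\bar b\big)=(a*_j c)*_i b,
\]
the middle equality being \eqref{assoc2} in $\P$. Finally, if $a,b\in\P(n)$ then $\rho(a)=e$ and $\bar a=a$, $\bar b=b$, so $a*_i b=a\circ_i b$; hence $\P$ is a suboperad of $(\Q,*)$ and the latter genuinely lifts the former.

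I do not anticipate a substantive obstacle: the whole argument rests on the one-line invariance identity $\rho(a*_i b)=\rho(a)$, after which the axioms are pure index bookkeeping inherited from $\P$. The only point needing a little care is the bureaucratic one already flagged — that $*_i$ is a morphism of $C$, equivalently that $\Q(n)\cong G\times\P(n)$ compatibly in $C$, rather than a mere function on elements — and this is automatic under the hypotheses in the use cases of interest (e.g. $C=\Top$ with $G$ a topological group), so it plays no role in the algebra above.
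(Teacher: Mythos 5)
Your proof is correct and follows essentially the same route as the paper's: the key step in both is the invariance identity $\rho(a*_i b)=\rho(a)$, from which both associativity axioms reduce to those of $\P$. Your $\bar z$ notation and the explicit consequence $\overline{a*_i b}=\bar a\circ_i\bar b$ merely streamline the same computation the paper carries out inline.
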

In words, we evaluate a composition within a distinguished reference frame associated with $\P$ and then state  the result in the reference frame of $a$ in $\Q$.
\begin{proof}
First, note that $a*_i b$ is well-defined. Indeed, for any $z\in \Q$, we have $\rho(\rho(z)^{-1}z)=\rho(z)^{-1}\rho(z)=e$.
Thus, $\rho(z)^{-1}z\in \P$, and the partial composition in the brackets is well-defined.
Next, observe that
\[
\rho(a*_i b)=\rho(\rho(a)[\rho(a)^{-1}a\circ_i \rho(b)^{-1}b])
=\rho(a)\rho(\underbrace{\rho(a)^{-1}a\circ_i \rho(b)^{-1}b)}_{\in \P})=\rho(a).
\]
The last equality is due to $\rho(z)=e$ for any $z\in\P$.

Now, let $a,b,c\in \Q$ and $i,j$ be as in the hypothesis of the sequential associativity axiom \eqref{assoc1}.
Then
\begin{align*}
    (a*_i b)*_{j} c&=    
    \rho(a*_i b)[\rho(a*_i b)^{-1}(a *_i b)\circ_{j} \rho(c)^{-1}c]
    =    \rho(a)[\rho(a)^{-1}(a *_i b)\circ_{j}\rho(c)^{-1}c]\\
    &=
    \rho(a)[\underbrace{(\rho(a)^{-1}a \circ_i \rho(b)^{-1}b)\circ_{j}\rho(c)^{-1}c}_{\in \P}]
    \overset{\eqref{assoc1}}{=}
    \rho(a)[\rho(a)^{-1}a \circ_i (\rho(b)^{-1}b\circ_{j-i+1}\rho(c)^{-1}c)]\\
    &=
    \rho(a)[\rho(a)^{-1}a \circ_i \rho(b)^{-1}(b *_{j-i+1}c)]
    =
    \rho(a)[\rho(a)^{-1}a \circ_i \rho(b *_{j-i+1} c)^{-1}(b *_{j-i+1}c)]=a*_i(b*_{j-i+1}c).
\end{align*}
The proof of parallel associativity is analogous and differs only by applying \eqref{assoc2} instead of \eqref{assoc1} in the second line above.
\end{proof}
% For any~$k,l,m\geq 1$, $a\in \P(k)$, $b\in \P(l)$, $c\in\P(m)$, we have
% \begin{align}
%     \label{assoc1}
%     (a\circ_i b)\circ_j c=a\circ_i(b\circ_{j-i+1}c)
% \end{align}
% for all $1\leq i\leq k$, $i\leq j\leq i+l-1$, and
% \begin{align}
%     \label{assoc2}
%     (a\circ_i b)\circ_{j + l - 1} c= (a\circ_j c)\circ_{i} b
% \end{align}
% for all $1\leq i<j\leq k$.
In our case, we take $\Q(n)$ to be the set of all tuples \eqref{AffWord} subject to the boundary matching condition \eqref{MatchCond} for all $n\geq 1$.
The set is taken with the diagonal left action by $G=Aff_+(\R^1)$. 
The normalization map $z\mapsto \rho(z)^{-1}z$ is defined by taking
$\rho_n(z):=
\left[
\begin{array}{cc}
   \lambda_1+\dots+\lambda_n  & b_1 \\
    0 & 1 
\end{array}
\right]
$, where $z$ is as in \eqref{AffWord}.
The corresponding cross-section operad $\P$ is the operad $\mathcal{C}$ from above. Note that points $0$ and $1$ are no longer distinguished in $\Q$. For this reason, any elements of $\Q(n)$ is identified with a configuration of $n+1$ points in $\R$. In what follows, we will denote this non-$\Sigma$ operad $\Q$ by $\widehat{\mathcal{C}}$.
\begin{figure}[H]
    \centering
    \resizebox{0.85\textwidth}{!}{\begin{tikzpicture}[thick, scale=2]

\begin{scope}
\draw[black] (0,0) -- (1,0);

% subinterval 1
\draw[cyan!40, very thick] (-0.1,0) -- (0.22,0) node[midway, above, black, scale=0.3] {1};

% subinterval 2
\draw[cyan!40, very thick] (0.22,0) -- (0.6,0)
node[midway, above, black, scale=0.3] {2};

% subinterval 3
\draw[cyan!40, very thick] (0.6,0) -- (1.1,0)
node[midway, above, black, scale=0.3] {3};

\filldraw[blue] (0,0) circle[radius=0.2pt];
\filldraw[blue] (1,0) circle[radius=0.2pt]; 
\filldraw[blue] (0.22,0) circle[radius=0.2pt];
\filldraw[blue] (0.6,0) circle[radius=0.2pt];

\end{scope}

\begin{scope}[shift={(1.5,0)}]

% subinterval 1
\draw[cyan!40, very thick] (-0.1,0) -- (0.2,0) node[midway, above, black, scale=0.3] {1};

% subinterval 2
\draw[cyan!40, very thick] (0.2,0) -- (1.1,0)
node[midway, above, black, scale=0.3] {2};

\filldraw[blue] (0,0) circle[radius=0.2pt];
\filldraw[blue] (0.2,0) circle[radius=0.2pt];
\filldraw[blue] (1,0) circle[radius=0.2pt];
\end{scope}

\begin{scope}[shift={(0.75,-0.3)}]
\draw[black] (0,0) -- (1,0);

% subinterval 1
\draw[cyan!40, very thick] (-0.1,0) -- (0.22,0) node[midway, above, black, scale=0.3] {1};

% subinterval 2
\draw[cyan!40, very thick] (0.22,0) -- (0.296,0)
node[midway, above, black, scale=0.3] {2};

% subinterval 3
\draw[cyan!40, very thick] (0.296,0) -- (0.7,0)
node[midway, above, black, scale=0.3] {3};

% Subinterval 4
\draw[cyan!40, very thick] (0.6,0) -- (1.1,0)
node[midway, above, black, scale=0.3] {4};

\filldraw[blue] (0,0) circle[radius=0.2pt];
\filldraw[blue] (1,0) circle[radius=0.2pt]; 
\filldraw[blue] (0.22,0) circle[radius=0.2pt];
\filldraw[blue] (0.296,0) circle[radius=0.2pt];
\filldraw[blue] (0.6,0) circle[radius=0.2pt];

\end{scope}

\node[scale=0.5] at (1.25,0) {$\circ_2$};
\node[scale=0.5] at (1.25,-0.12) {$=$};

\end{tikzpicture}}
    \caption{Evaluating a partial composition in $\widehat{\mathcal{C}}$.}
\end{figure}
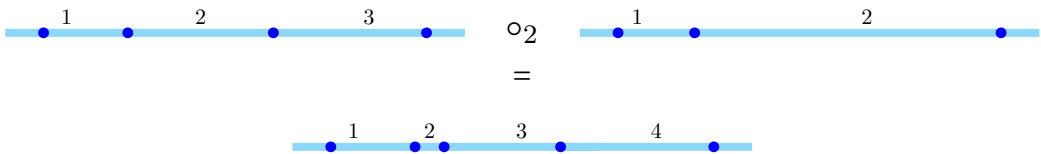

As an upshot, starting with an non-$\Sigma$ operad $\P$ and a group $G$, there is a way to extend $\P$ to a larger $G$-equivariant operad $\Q$. Such an extension is parametrized by a choice of a moving frame ${\rho=\{\rho_{n}:\Q(n)\to G\}_{n\geq 1}}$. This will be used again in the next section.
\section{An operad of planar rooted line arrangements}
\label{LinArrSec}
Let $L$ be a collection of $n+1$ lines $\ell_0,\ell_1,\dots,\ell_n$ embedded into an oriented Euclidean plane $\bbR^2$. Assume that line $\ell_0$ is given a direction. The open half-plane that lies to the left of $\ell_0$ with respect to this direction will be referred to as the  \textit{upper} half-plane of $L$.
\begin{definition}
\label{PRLDef}
We say that a collection of lines $L=\{\ell_0,\ell_1,\dots,\ell_n\}$ in $\bbR^2$ is a \emph{planar rooted line arrangement} of rank $n$ if
\begin{enumerate}
\item
no two lines among $\ell_0,\ell_1,\dots,\ell_n$ are parallel or overlapping;
\item 
\label{PRLDef2}
for any two lines $\ell_i$, $\ell_j$, such that $i,j\neq 0$ and $i\neq j$, the point of intersection $\ell_i\cap \ell_j$ lies in the upper half-plane;
\item 
the order in which the points of intersection $L_i:=\ell_0\cap \ell_i$, for $i\geq 1$, appear on the oriented line $\ell_0$ as we trace it in the positive direction agrees with the natural order on the line indices $1,\dots,n$. \end{enumerate}
The distinguished line $\ell_0$ will be called the \textit{root} of an arrangement $L$. 
% We consider two planar rooted line arrangements equivalent if one of them can be mapped to another bijectively, as a set of points, by an orientation-preserving isometry of $\bbR^2$ that maps the oriented root line of the first arrangement to the root line of the second one.
\end{definition}
Note that the lines $\ell_0,\ell_1,\dots,\ell_n$ of $L$ are not assumed to be in a general position. A planar rooted line arrangement may contain a subarrangement of three, or more, lines intersecting at a single point. As an immediate observation, note that by condition \eqref{PRLDef2} of the above definition, an orientation of the root line $\ell_0$ of $L$ induces a canonical orientation on each line $\ell_i$ for $i\geq 1$. Specifically, for $i\geq 1$, a directional vector of $\ell_i$ stemming out of $L_i$ is chosen to lie in the upper-half plane with respect to $\ell_0$.
\begin{figure}[H]
\resizebox{0.7\textwidth}{!}{\begin{tikzpicture}[>={Stealth[length=4mm, width=3mm]}]

% \coordinate (A) at (0,0);
% \coordinate (B) at (3,3);
% \draw [name path=A--B] (A) -- (B);
% \coordinate (C) at (3,0);
% \coordinate (D) at (0,1);
% \draw [name path=C--D] (C) -- (D);
% \path [name intersections={of=A--B and C--D,by=E}];
% \node [fill=red,inner sep=1pt,label=-90:$E$] at (E) {};

\coordinate (P1) at (2, 0);
\coordinate (P2) at (4.5, 0);
\coordinate (P3) at (6.5, 0);
\coordinate (P4) at (8, 0);

\clip (-1,-1) rectangle (10,5);

% \coordinate (PP1) at (5.5, 2);
% \coordinate (PP2) at (7, 3);
% \coordinate (PP3) at (8, 4);
% \coordinate (PP4) at (6, 4);

\draw [->,shorten >= -1.5cm, dashed, blue, thick, name path=root_line] node[left] {$\ell_0$} (0,0) -- (P4);
\draw[thick, shorten <= -1.5cm, name path=line1] (P1)  -- +(35: 10cm);
\draw[thick, shorten <= -1.5cm, name path=line2] (P2) -- +(60: 8cm);
\draw[thick, shorten <= -1.5cm, name path=line3] (P3) -- +(85: 8cm);
\draw[thick, shorten <= -1.5cm, name path=line4] (P4) -- +(140: 8cm);

\path[name path=upper_line] (0,5)--(10,5);
 
\foreach \i in {1,...,4}
{
\path[name intersections={of=root_line and line\i,by=L\i}];
\fill[blue] (L\i) circle (2pt);
\node[below left=0.1cm] at (L\i) {$L_\i$};
%\node[below left=0.1cm] at (L\i) {$\ell_\i$};
}

\draw (L1)+(1cm,0) arc (0:35:1cm) node[midway, right] {$\theta_1$};
\draw (L2)+(0.7cm,0) arc (0:60:0.7cm) node[midway, right] {$\theta_2$};
\draw (L3)+(0.5cm,0) arc (0:85:0.5cm) node[midway, right] {$\theta_3$};
\draw (L4)+(0.5cm,0) arc (0:140:0.5cm) node[midway, right=0.1cm] {$\theta_4$};

\draw [blue, ->] (0.5,1.5) -- node[left] {$t$} (0.5,3.5);

\end{tikzpicture} }
\caption  
{A planar rooted line arrangement of rank $4$.}
\end{figure}
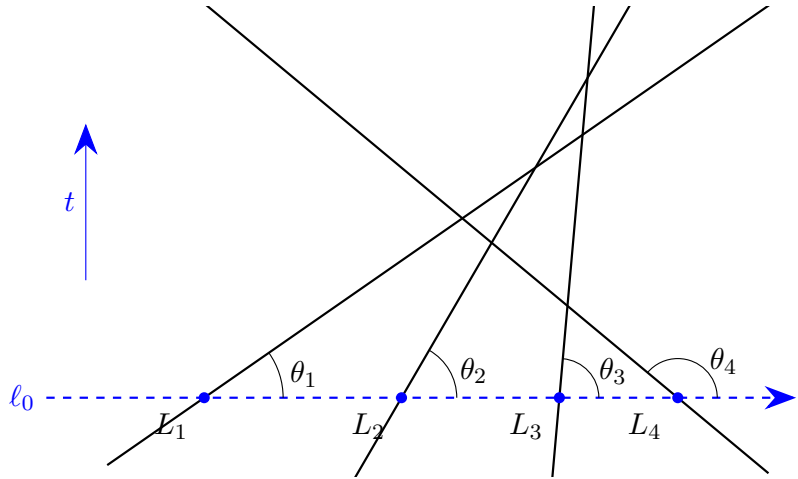 
In terms of scattering diagrams, the root line $\ell_0$ represents the one-dimensional space at time $t=0$. 
The remaining $n$ lines represent the worldline asymptotes of $n$ particles undergoing a scattering process in transition presentation. 
For $n\geq 1$, the set of all rank-$n$ planar rooted line arrangements is a real $2n$-dimensional manifold. As a possible set of global coordinates one can take positions of $L_i$ on $\ell_0$ and the oriented angles $\theta_i$ from $\ell_0$ to $\ell_i$ for $1\leq i\leq n$, or momenta $p_i=\text{ctg}(\theta_i)$ identifying this $2n$-dimensional manifold with with $UConf_n(\R)\times UConf_n((0, \pi))$,
or $UConf_n(\R)\times UConf_n(\R)$ respectively.
Here, as before, $UConf_n(X)$ is the configuration space of $n$ unlabeled, but naturally ordered in our case, points in $X$.
Note that in the copy of $UConf(\R)$ for the momenta the order of point enumeration is reversed -- the slower particles are further right. Equivalently, we may think that to each particle we assign its negative momentum.

In what follows, we will consider two examples of a non-$\Sigma$ operad structure on planar rooted line arrangements. We start off with a simpler one, but it is the second one -- affine-geometric, that will be of particular interest to us.
\subsection{Construction I -- decoupling positions and momenta.}
As we have seen in the previous section, configuration spaces $UConf_n(\R)$ assemble into an operad $\widehat{C}$, where $UConf_n(\R)=\widehat{C}(n-1)$ for $n\geq 2$.
Then the coordinate presentation $UConf_n(\R)\times UConf_n(\R)$ of rank-$n$ arrangements in terms of positions and negative momenta yields a non-$\Sigma$ operad structure on the space of all planar rooted line arrangements isomorphic to $\widehat{\C}\times\widehat{\C}$.
Again, note that a rank-$n$ arrangement is identified with an element of $\widehat{\C}(n-1)\times\widehat{C}(n-1)$. Another model thereof is an operad of polygonal chains.

For a rank-$(n+1)$ planar rooted line arrangement, or equivalently, for an initial configuration of $n+1$ particles, consider the polygonal chain $P_1P_2\dots P_{n+1}$ in $\bbR^2$, where $P_i=(q_i,-p_i)$, the initial position and the negative momentum of the $i$-th particle.
Note that both the $q$ and the $-p_i$ coordinates are strictly monotonically increasing for $1\leq i\leq n$. For all $n\geq 2$, let us define $\P(n)$ to be the set of all polygonal chains in the $qp$-plane with this property with $n+1$ vertices.
For any two polygonal chains $P\in \P(n)$, $Q\in \P(m)$ and $1\leq i\leq n$, we define their partial composition $P\circ_i Q$ as follows. The operation amounts to replacing the $i$-th line segment $P_iP_{i+1}$ with a rescaled and translated copy of $Q$ so that $Q_1$ maps to $P_{i}$ and $Q_{m+1}$ maps to $P_{i+1}$. Here, rescaling stands for the transformation $(q,p)\mapsto \left(\frac{q(P_{i+1})-q(P_i)}{q(Q_{m+1})-q(Q_1)}\cdot q, \frac{p(P_{i+1})-p(P_i)}{p(Q_{m+1})-p(Q_1)}\cdot p\right)$,
where $q(K)$ and $p(K)$ denote the $q$,$p$-coordinates of a point $K$ respectively.
This path substitution operation turns $\P$ into a non-$\Sigma$ operad isomorphic to $\widehat{\C}\times \widehat{\C}$.
Correspondingly, this operad is generated by two topological families of arity-two generators subject to quadratic relations. The latter are translated and rescaled barycentric relations \eqref{Bary}.
\begin{center}
  \begin{figure}[H]
  \resizebox{0.7\textwidth}{!}{\begin{tikzpicture}

\coordinate (P1) at (0, 0);
\coordinate (P2) at (0.5, 2);
\coordinate (P3) at (1, 3);
\coordinate (P4) at (4, 5);
\coordinate (P5) at (5, 6);

\coordinate (Q1) at (7, 0);
\coordinate (Q2) at (8, 3);
\coordinate (Q3) at (10, 4.5);
\coordinate (Q4) at (11, 6);

\coordinate (R1) at (12, 0);
\coordinate (R2) at (12.5, 2);
\coordinate (R3) at (13, 3);

\coordinate (R4) at (13.75, 3.33);
\coordinate (R5) at (14.25, 4.5);

\coordinate (R6) at (16, 5);
\coordinate (R7) at (17, 6);

\draw[thick] (P1)--(P2)--(P3)--(P4)--(P5); 
\foreach \i in {1,...,5}
{
\fill[black] (P\i) circle (3pt);
\node[above left] at (P\i) {$P_\i$};
}

\draw[thick, red] (P3)--(P4);
\node at (6,3) {\huge$\circ_3$};

\draw[thick] (Q1)--(Q2)--(Q3)--(Q4); 
\foreach \i in {1,...,4}
{
\fill[black] (Q\i) circle (3pt);
\node[above left] at (Q\i) {$Q_\i$};
}
\node at (11,3) {\huge$=$};

\draw[thick] (R1)--(R2)--(R3)--(R4)--(R5)--(R6)--(R7); 
\foreach \i in {1,...,7}
{
\fill[black] (R\i) circle (3pt);
\node[above left] at (R\i) {$R_\i$};
}
\draw[thick, red] (R3)--(R4)--(R5)--(R6); 

\end{tikzpicture} }
  \caption  
  {Evaluating $P \circ_3 Q\in\P(6)$.}
  \end{figure}
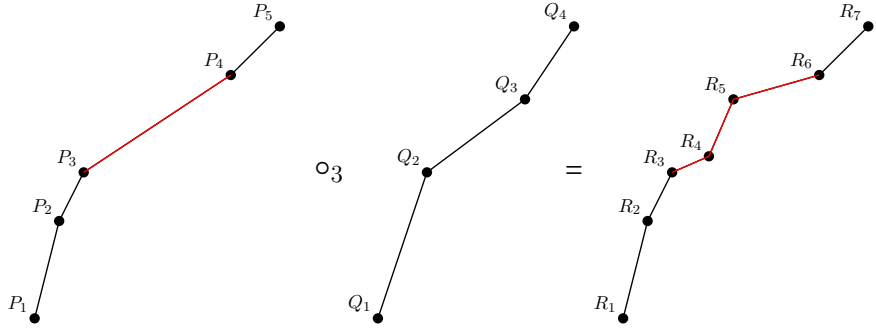
  \end{center}

\subsection{Construction II -- using affine symmetries.}
For $n\geq 1$, let $\widehat \L(n)$ be the space of all rooted line arrangements of rank $n+1$.
% That is, with a root line taken into account, any $L\in \L(n)$ is an arrangement of $n+2$ lines. 
%We define a non-$\Sigma$ topological operad structure on %$\L:=\{\L(n)\}_{n\geq 1}$ as follows. 
Let $m, n\geq 1$, $P\in \widehat \L(m)$, $Q\in \widehat \L(n)$ and~${1\leq i\leq m}$.
Consider the points of intersection of the following pairs of lines of $P$: $(p_0, p_i)$, $(p_i,p_{i+1})$, $(p_0, p_{i+1})$, and denote them by $P_i$, $P_{i,i+1}$, $P_{i+1}$ respectively. Similarly, let  $Q_1$, $Q_{1,m+1}$, $Q_{m+1}$ denote, in their respective order, the points of intersection of the following pairs of lines: $(q_0, q_1)$, $(q_1,q_{m+1})$, $(q_0, q_{m+1})$.
Here, as per our usual notation, $p_0$ and $q_0$ denote the root lines of $P$ and $Q$ respectively.
Note that neither of these two triples of points is collinear.
Then, by the fundamental theorem of affine geometry, there exists a unique orientation-preserving affine transformation %$T_{Q_1Q_{1,m+1}Q_{m+1}}^{P_iP_{i,i+1}P_{i+1}}:\bbR^2\to %\bbR^2$
$T_i\in Aff(\R^2)$ that sends the ordered triple ${(Q_1,Q_{1,m+1},Q_{m+1})}$ to ${(P_i,P_{i,i+1},P_{i+1})}$. Such an affine transformation sends the lines $q_0, q_1$ and $q_{m+1}$ to $p_0$, $p_i$ and $p_{i+1}$ respectively.
\begin{figure}[H]
\centering
\resizebox{0.9\textwidth}{!}{\begin{tikzpicture}

% \coordinate (A) at (0,0);
% \coordinate (B) at (3,3);
% \draw [name path=A--B] (A) -- (B);
% \coordinate (C) at (3,0);
% \coordinate (D) at (0,1);
% \draw [name path=C--D] (C) -- (D);
% \path [name intersections={of=A--B and C--D,by=E}];
% \node [fill=red,inner sep=1pt,label=-90:$E$] at (E) {};

% \coordinate (P1) at (0.5, 0);
% \coordinate (P2) at (3, 0);
% \coordinate (P3) at (4.5, 0);

\coordinate (P1) at (-0.5, 0);
\coordinate (P2) at (2, 0);
\coordinate (P3) at (4.5, 0);
\coordinate (P4) at (5.5, 0);

\coordinate (Q1) at (14.5, 0);
\coordinate (Q2) at (17, 0);
\coordinate (Q3) at (18.2, 0);
\coordinate (Q4) at (19.5, 0);
\coordinate (Q5) at (20.5, 0);

\coordinate (R1) at (8.5, 0);
\coordinate (R2) at (10, 0);
\coordinate (R3) at (12.5, 0);

\clip (-1,-0.7) rectangle (23,5.2);

% \coordinate (PP1) at (5.5, 2);
% \coordinate (PP2) at (7, 3);
% \coordinate (PP3) at (8, 4);
% \coordinate (PP4) at (6, 4);

\draw [->,shorten >= -1cm, dashed, blue, very thick, name path=root_line] (-1,0) -- (P4);
\draw[very thick, shorten <= -1.5cm, name path=line1] (P1)  -- +(35: 10cm);
\draw[very thick, shorten <= -1.5cm, name path=line2] (P2) -- +(50: 8cm);
\draw[very thick,  shorten <= -1.5cm, name path=line3] (P3) -- +(85: 7cm);
\draw[very thick, shorten <= -1.5cm, name path=line4] (P4) -- +(140: 7cm);

\foreach \i in {1,...,4}
{
\path[name intersections={of=root_line and line\i,by=L\i}];
\fill[blue] (L\i) circle (4pt);
\node[below right, font=\huge] at (L\i) {$P_\i$};
%\node[below left=0.1cm] at (L\i) {$\ell_\i$};
}

\path[name intersections={of=line2 and line3,by=L23}];
\fill[red] (L23) circle (4pt);
\node[right, font=\huge] at (L23) {$P_{23}$};
\fill[red] (L2)circle (4pt);
\fill[red] (L3)circle (4pt);

\node at (7.5,2) {\huge$*_2$};

\draw [->,shorten >= -0.5cm, dashed, blue, very thick, name path=root_line] (8,0) -- (R3);
\draw[very thick, shorten <= -1.5cm, name path=line1] (R1)  -- +(40: 7cm);
\draw[very thick, shorten <= -1.5cm, name path=line2] (R2) -- +(95: 8cm);
\draw[very thick, shorten <= -1.5cm, name path=line3] (R3) -- +(135: 7cm);

\foreach \i in {1,...,3}
{
\path[name intersections={of=root_line and line\i,by=L\i}];
\fill[blue] (L\i) circle (4pt);
\node[below right, font=\huge] at (L\i) {$Q_\i$};
%\node[below left=0.1cm] at (L\i) {$\ell_\i$};
}
\path[name intersections={of=line1 and line3,by=L13}];
\fill[red] (L13)circle (4pt);
\node[right, font=\huge] at (L13) {$Q_{13}$};
\fill[red] (L1)circle (4pt);
\fill[red] (L3)circle (4pt);

\node at (13.5,2) {\huge$=$};

\draw [->,shorten >= -1.5cm, dashed, blue, very thick, name path=root_line] (Q1)++(-1,0) -- (Q5);
\draw[very thick, shorten <= -1.5cm, name path=line1] (Q1)  -- +(35: 10cm);
\draw[very thick, red, shorten <= -1.5cm, name path=line2] (Q2) -- +(50: 8cm);
\draw[very thick, red, shorten <= -1.5cm, name path=line3] (Q3) -- +(70: 8cm);
\draw[very thick, red, shorten <= -1.5cm, name path=line4] (Q4) -- +(85: 7cm);
\draw[very thick, shorten <= -1.5cm, name path=line5] (Q5) -- +(140: 7cm);

\foreach \i in {1,...,5}
{
\path[name intersections={of=root_line and line\i,by=L\i}];
\fill[blue] (L\i) circle (4pt);
\node[below right, font=\huge] at (L\i) {$\i$};
}
\path[name intersections={of=line2 and line4,by=L24}];
\fill[red] (L24)circle (4pt);
\fill[red] (L2)circle (4pt);
\fill[red] (L4)circle (4pt);

\end{tikzpicture} }
\caption  
{Evaluating $P *_2 Q\in \widehat \L(4)$ for $P\in \widehat \L(3)$, $Q\in \widehat \L(2)$.}
\end{figure}
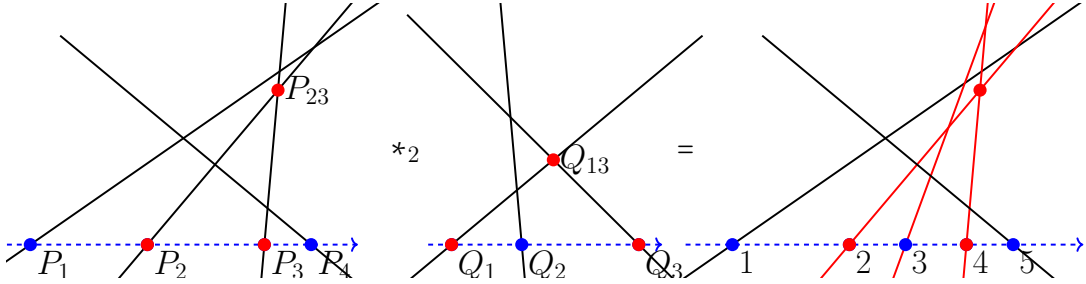
\begin{lemma}
\label{LinArrLemma}
Let $P\cup T_i(Q)$ be the union of the line arrangement $P$ and the lines in the image of the arrangement $Q$ under $T_i$. 
% Here, we identify $p_i$ with $T_i(q_1)$ and $p_{i+1}$ with $T_i(q_{m+1})$.
Then, upon appropriate enumeration of the lines, $P*_i Q:=P\cup T_i(Q)$ is a well-defined planar rooted line arrangement of rank $n+m-1$.
\end{lemma}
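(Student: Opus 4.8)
The plan is to reduce Definition~\ref{PRLDef} to a monotone coordinate description of rooted arrangements and then check that this description survives the transformation $T_i$.

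First I would record the reformulation behind the coordinates $UConf_n(\R)\times UConf_n((0,\pi))$ mentioned above. Place $\ell_0$ on the $x$-axis oriented in the $+x$ direction, so that the upper half-plane is $\{y>0\}$, and write a line through $(x_0,0)$ making oriented angle $\alpha\in(0,\pi)$ with $\ell_0$ as $t\mapsto(x_0+t\cos\alpha,\,t\sin\alpha)$. For a second such line through $(y_0,0)$, with $x_0<y_0$, at angle $\beta\in(0,\pi)$, solving for the intersection gives $t\,\sin(\beta-\alpha)/\sin\beta=y_0-x_0>0$; hence the two lines are parallel iff $\beta=\alpha$, and otherwise meet in the upper half-plane iff $\beta>\alpha$. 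Consequently, for particle lines $\ell_1,\dots,\ell_r$ meeting $\ell_0$ at $L_1,\dots,L_r$ in the order fixed by the orientation of $\ell_0$, conditions (1)--(3) of Definition~\ref{PRLDef} all hold if and only if $L_1<\dots<L_r$ along $\ell_0$ and the canonically oriented angles satisfy $0<\theta_1<\dots<\theta_r<\pi$ (strict monotonicity of the $\theta_j$ already forces (1), the $r+1$ lines then having pairwise distinct directions). So it is enough to exhibit an enumeration of the lines of $P*_iQ$ and to verify these two strict monotonicities.

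Next I would set up the enumeration. Take the root of $P*_iQ$ to be $p_0=T_i(q_0)$; since $T_i$ sends $q_0,q_1,q_{n+1}$ to $p_0,p_i,p_{i+1}$ (as recorded just before the lemma), the lines of $P\cup T_i(Q)$ are $p_0,p_1,\dots,p_{m+1}$ together with the $n-1$ new lines $T_i(q_2),\dots,T_i(q_n)$, so $P*_iQ$ has $m+n$ particle lines, i.e.\ $P*_iQ\in\widehat{\L}(n+m-1)$. Now $T_i$ carries $q_0$ to $p_0$ compatibly with orientations: it sends $Q_1=q_0\cap q_1$ and $Q_{n+1}=q_0\cap q_{n+1}$, which occur in this order along $q_0$ by condition (3) for $Q$, to $P_i=p_0\cap p_i$ and $P_{i+1}=p_0\cap p_{i+1}$, which occur in this order along $p_0$; being moreover orientation-preserving on $\R^2$, it therefore maps the upper half-plane of $Q$ onto that of $P*_iQ$. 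Hence $T_i$ sends the ordered points $q_0\cap q_1<\dots<q_0\cap q_{n+1}$ on $q_0$ to an increasing sequence on $p_0$ running from $P_i$ to $P_{i+1}$; since $P_i,P_{i+1}$ are consecutive among $P_1<\dots<P_{m+1}$, the new intersection points $T_i(q_0\cap q_c)$, $2\le c\le n$, lie strictly between $P_i$ and $P_{i+1}$ and increase with $c$. Thus along $p_0$ the intersection points occur in the order
\[
P_1,\ \dots,\ P_{i-1},\ P_i,\ T_i(q_0\cap q_2),\ \dots,\ T_i(q_0\cap q_n),\ P_{i+1},\ \dots,\ P_{m+1},
\]
and I relabel the $m+n$ particle lines $1,\dots,m+n$ in this left-to-right order along $p_0$ (this is forced by condition (3) and matches the grafting index convention of Example~\ref{AsOpEx}); this gives the first monotonicity.

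For the angles, the two runs coming from $P$, namely $\theta^{P}_1<\dots<\theta^{P}_{m+1}$, are monotone because $P$ is an arrangement. For the run coming from $T_i(Q)$ — the one genuinely new point — note that although $T_i$ is not a similarity and does not preserve angles, its linear part lies in $GL_2^+(\R)$, hence acts on the circle of directions by an orientation-preserving homeomorphism, and in particular preserves the linear order of any family of directions contained in a common half-circle. The directions of $q_1,\dots,q_{n+1}$ (oriented into the upper half-plane of $Q$) occupy, in this order, the open arc swept counterclockwise from the positive direction of $q_0$ to its negative direction, and $T_i$ carries this arc onto the corresponding arc for $p_0$ — it sends the positive direction of $q_0$ to that of $p_0$ and the upper half-plane of $Q$ to that of $P*_iQ$ — so the oriented angles of $T_i(q_1),\dots,T_i(q_{n+1})$ relative to $p_0$ satisfy $0<\theta(T_iq_1)<\dots<\theta(T_iq_{n+1})<\pi$. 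Finally the junctions match, since $T_i(q_1)=p_i$ gives $\theta(T_iq_1)=\theta^{P}_i>\theta^{P}_{i-1}$ and $T_i(q_{n+1})=p_{i+1}$ gives $\theta(T_iq_{n+1})=\theta^{P}_{i+1}<\theta^{P}_{i+2}$; concatenating the three monotone runs through these shared endpoints yields a strictly increasing angle sequence in the order fixed above. By the reformulation of the second paragraph this exhibits $P*_iQ$ as a planar rooted line arrangement of the asserted rank. I expect the main obstacle to be exactly the ``mixed'' pairwise intersections, of a line of $P$ with a new line $T_i(q_c)$ — these are not directly controlled by the data of $P$ and $Q$ separately — and the decisive point to be that orientation-preserving affine maps, while destroying angles, still preserve the circular order of directions; the existence and orientation-preserving character of $T_i$ itself I take as already established in the text preceding the lemma.
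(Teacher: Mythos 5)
Your proof is correct and follows essentially the same route as the paper's: both reduce the three axioms of Definition~\ref{PRLDef} to the interleaving of the intersection points along the root line and of the oriented angles, and both hinge on the claim that the angles of the lines of $T_i(Q)$ fall strictly between $\theta_i$ and $\theta_{i+1}$. The one place where you go further is that you actually justify this angle claim---via the orientation-preserving action of the linear part of $T_i$ on the circle of directions---whereas the paper merely asserts it, and your explicit equivalence between the axioms and the double monotonicity of positions and angles handles the mixed intersections (a line of $P$ with a line of $T_i(Q)$) uniformly rather than case by case.
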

\begin{proof}
First, since an affine transformation $T_i$ is bijective and preserves the property of lines being parallel, no two lines in the image $T_i(Q)$ are overlapping or parallel to each other. Furthermore, the angles that the lines in the image $T_i(Q)$ make with the root line of arrangement $P$ are strictly in between the angle $\theta_i$ between the line $p_i$ and $p_0$ and the angle $\theta_{i+1}$ between the line $p_{i+1}$ and $p_0$ in arrangement $P$. Hence, no line in the image $T_i(Q)$ can be parallel to any of the lines of $P$. This verifies condition (1) of Definition \ref{PRLDef}.

By the same set of inequalities for the angles, any line in the image $T_i(Q)$ intersects any line of $P$ in the upper half-plane. Since $T_i$ respects the orientations of the root lines $p_0$ and $q_0$, all the intersections between the lines in $T_i(Q)$ lie in the upper-half plane as well. This verifies condition (2) of Definition \ref{PRLDef}.

To make condition (3) hold, we enumerate the lines in $P\cup T_i(Q)=\{r_1,r_2,\dots,r_{n+m-1}\}$ by setting
${r_j:=
\begin{cases}
p_j,\quad & j \leq i\\
T(q_{j-i+1}),\quad & i < j < i+m\\
p_{j-m+1},\quad & i+m \leq j
\end{cases}}$.
\end{proof}
We will verify that the partial compositions $(-*_i-)$ satisfy the generalized associativity conditions. 
\begin{theorem}
\label{CompThm}
The collection $\widehat\L:=\{\widehat\L(n)\}_{n\geq 1}$ with partial compositions $P*_i Q$ defined above is a non-$\Sigma$ operad in $\Top$.
\end{theorem}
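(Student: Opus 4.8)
The plan is to realise $\widehat\L$ as the $G$-equivariant extension, in the sense of Proposition~\ref{MovingFrameLemma}, of an operad of \emph{normalized} arrangements, with $G=Aff_+(\R^2)$ the orientation-preserving affine group acting diagonally on each $\widehat\L(n)$ (apply a transformation to all the lines of an arrangement). I would first check that this action is well defined: affine bijections send non-parallel lines to non-parallel lines, the orientation-preserving ones carry the left half-plane of an oriented line to the left half-plane of its image, and $g$ carries the intersection points on $\ell_0$, in their order, to the intersection points on $g\cdot\ell_0$ in their order, so conditions (1)--(3) of Definition~\ref{PRLDef} are stable and the action is continuous. Next, I would fix a reference triple of non-collinear points, say $R_1=(0,0)$, $R_2=(\tfrac12,1)$, $R_3=(1,0)$, and to an arrangement $L=\{\ell_0,\dots,\ell_{n+1}\}\in\widehat\L(n)$ attach its \emph{frame triple} $(L_1,L_{1,n+1},L_{n+1})$, where $L_1=\ell_0\cap\ell_1$, $L_{n+1}=\ell_0\cap\ell_{n+1}$, $L_{1,n+1}=\ell_1\cap\ell_{n+1}$; by conditions (1)--(2) these three points are non-collinear, and they depend continuously on $L$ since none of the lines involved are parallel. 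By the fundamental theorem of affine geometry there is a unique $\rho_n(L)\in Aff_+(\R^2)$ carrying $(R_1,R_2,R_3)$ to the frame triple of $L$; solving the corresponding linear system shows $\rho_n$ is continuous, and since $g$ carries the frame triple of $L$ to that of $g\cdot L$ one gets $\rho_n(g\cdot L)=g\,\rho_n(L)$. Thus $\rho=\{\rho_n\}$ is a continuous equivariant moving frame.

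I would then analyse the cross-section $\P(n):=\rho_n^{-1}(e)$ -- the arrangements whose frame triple is exactly $(R_1,R_2,R_3)$, i.e.\ whose root is the positively oriented $x$-axis with $\ell_0\cap\ell_1=R_1$, $\ell_0\cap\ell_{n+1}=R_3$, $\ell_1\cap\ell_{n+1}=R_2$ -- equipped with the partial compositions given by literal substitution: for $P\in\P(m)$, $Q\in\P(n)$ and $1\le i\le m$, put $P\circ_i Q:=P\cup T_i(Q)$, with $T_i$ the affine map of Lemma~\ref{LinArrLemma} (which in this case sends $(R_1,R_2,R_3)$, the frame triple of $Q$, to $(P_i,P_{i,i+1},P_{i+1})$). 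Lemma~\ref{LinArrLemma}, applied to $P$ and $Q$, already guarantees that $P\circ_i Q$ is a valid rooted line arrangement of arity $m+n-1$; I would observe that it again lies in $\P$, because $T_i$ carries the two collinear root points of $Q$ to $P_i,P_{i+1}\in\ell_0$ and hence fixes the root line of $P$, while by the re-indexing in the proof of Lemma~\ref{LinArrLemma} the outermost non-root lines of $P\circ_i Q$ are those of $P$, so its frame triple equals that of $P$, namely $(R_1,R_2,R_3)$. Continuity of $\circ_i$ is immediate since $T_i$ depends rationally, with non-vanishing denominator, on $P$. The crux is then to verify the sequential and parallel associativity axioms \eqref{assoc1}--\eqref{assoc2} for $\circ_i$ on $\P$. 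For $(P\circ_i Q)\circ_j R$ versus $P\circ_i(Q\circ_{j-i+1}R)$, and likewise in the parallel case, both sides are a union $P\cup(\text{affine image of }Q)\cup(\text{affine image of }R)$; the two images of $R$ coincide because each is the unique affine placement sending the frame triple of $R$ to one and the same target triple -- a composite of affine placements is again an affine placement, pinned down by its effect on three non-collinear points -- and the line enumerations match because slot $j$ of $P\circ_i Q$, for $i\le j\le i+n-1$, is exactly the $T_i$-image of slot $j-i+1$ of $Q$, which is the same index bookkeeping, encoded in the re-indexing formula of Lemma~\ref{LinArrLemma}, that makes \eqref{assoc1}--\eqref{assoc2} hold for grafting of planar rooted trees. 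I expect this verification to carry all the genuine content of the theorem; the remaining steps are routine.

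Finally, I would connect the two structures. For arbitrary $P\in\widehat\L(m)$ and $Q\in\widehat\L(n)$ the arrangements $\rho_n(P)^{-1}\cdot P$ and $\rho_n(Q)^{-1}\cdot Q$ lie in $\P$, and a short computation shows that $T_i=S\circ\rho_n(Q)^{-1}$, where $S$ denotes the affine map taking $(R_1,R_2,R_3)$ to $(P_i,P_{i,i+1},P_{i+1})$, while $\rho_n(P)$ composed with the placement map used for $\circ_i$ on $\P$ equals $S$; hence
\[
P*_i Q=\rho_n(P)\bigl[(\rho_n(P)^{-1}\cdot P)\circ_i(\rho_n(Q)^{-1}\cdot Q)\bigr],
\]
which is exactly the moving-frame composition \eqref{MovingFrameComp}. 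Proposition~\ref{MovingFrameLemma} then promotes the operad structure on $\P$ to a non-$\Sigma$ operad structure on $\widehat\L$ with partial compositions $(-*_i-)$, so that \eqref{assoc1}--\eqref{assoc2} hold; and since $Aff_+(\R^2)$ is a topological group acting continuously, $\rho$ is continuous and $\circ_i$ on $\P$ is continuous, the maps $*_i$ are composites of continuous maps and hence continuous. Therefore $\widehat\L$ is a non-$\Sigma$ operad in $\Top$. (Alternatively, one can bypass $\P$ altogether and verify \eqref{assoc1}--\eqref{assoc2} for $*_i$ directly on $\widehat\L$ by the same uniqueness-of-affine-maps argument; the moving-frame route is the cleaner one and matches the remark following Proposition~\ref{MovingFrameLemma}.)
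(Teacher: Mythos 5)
Your proposal is correct and shares the paper's overall architecture: normalize arrangements by an equivariant moving frame, establish the operad axioms on the normalized cross-section, and lift to $\widehat\L$ via Proposition~\ref{MovingFrameLemma}. Where you genuinely diverge is in the key lemma. The paper (Lemma~\ref{NormLemma}) encodes a normalized arrangement as a word $(T_1,\dots,T_n)$ in a four-dimensional abelian subgroup $G\leq Aff(\R^2)$ preserving the root line, so that associativity is inherited for free from the word operad $\W G$, and the only thing left to check is that the matching and normalization conditions \eqref{MatchCond2}--\eqref{NormCond2} are closed under the word-operad compositions \eqref{TupleComp}. You instead verify \eqref{assoc1}--\eqref{assoc2} directly on the cross-section by the uniqueness of an affine map determined by its values on a non-collinear triple: both sides of each identity are $P$ together with affine images of $Q$ and $R$, and the two placements of $R$ agree because they send its frame triple to the same target triple. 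Both routes are sound; the word-operad embedding outsources the associativity bookkeeping to $\W G$ at the price of setting up the matrix encoding, while your argument is more self-contained and geometric but must carry the index bookkeeping explicitly (including the boundary slots $j=i$ and $j=i+n-1$, where one of the two lines bounding the slot is shared between $P$ and $T_i(Q)$; your observation that slot $j$ of $P\circ_i Q$ is the $T_i$-image of slot $j-i+1$ of $Q$ covers this). A further difference: you take $G=Aff_+(\R^2)$ with the frame triple $(L_1,L_{1,n+1},L_{n+1})$, so your $\widehat\L$ genuinely accommodates arrangements with arbitrary root lines, whereas the paper's choice of the root-line-preserving subgroup implicitly pins the root to the $t=0$ axis; your version is, if anything, slightly more general, and your closing identification of $P*_iQ$ with the moving-frame composition \eqref{MovingFrameComp} is the same uniqueness argument again and is correct.
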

\noindent
% We need to verify that $(-*_i-)$ satisfy the generalized associativity conditions.
% To this end, 
We will prove the statement for \emph{normalized} planar rooted line arrangements $\L$ first, and will extend it to $\widehat \L$ by means of proposition \ref{MovingFrameLemma}. 
The operations $(-*_i-)$ defined on $\widehat\L$ above will be identified as the partial compositions of this extended operad.
To this end, for $n\geq 1$, let $\L(n)$ be the set of all planar rooted line arrangements $P$ of rank $n+1$ bounded by the lines $t=0$, $q=0$ and $q+t=1$ in the $qt$-plane.
These are the root line $p_0$, the first line $p_1$ and the last line $p_{n+1}$ of any $P\in \L(n)$ respectively. That is, $\L$ consists of the scattering diagrams drawn in the rest frame of the first particle, with units chosen in such a way that the last particle approaches the first one at the unit speed from the right.
\begin{lemma}
\label{NormLemma}
Let $\L:=\{\L(n)\}_{n\geq 1}$ and $(-\circ_i-)$ denote the restriction of the operation $(-*_i-)$ defined above onto $\L$. Then $\L$ with these partial compositions is a non-$\Sigma$ operad.
\end{lemma}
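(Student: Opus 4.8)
The plan is to verify directly that the restricted operations $(-\circ_i-)$ are closed on $\L$ and satisfy the sequential and parallel associativity identities \eqref{assoc1} and \eqref{assoc2}; continuity of the compositions, hence the assertion that $\L$ is an operad in $\Top$, will then be a routine coordinate check. For closure, note $\L\subset\widehat\L$, so Lemma \ref{LinArrLemma} already tells us $P\circ_i Q$ is a legitimate planar rooted line arrangement; what remains is to see it is \emph{normalized}. This is immediate from the line-enumeration rule of Lemma \ref{LinArrLemma}: the root line $t=0$ of $P$ is never moved, the line $q=0$ (the first non-root line of $P$) stays first because $i\geq 1$, and the line $q+t=1$ (the last non-root line of $P$) stays last because $i$ never exceeds the number of gaps of $P$. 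Hence $P\circ_i Q\in\L$.

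The computation of the two associativity laws rests on one simplification. Every $Q\in\L(n)$ has the \emph{same} distinguished triple of intersection points, namely $e:=\big((0,0),(0,1),(1,0)\big)$, because its root, first and last lines are the fixed lines $t=0$, $q=0$, $q+t=1$. Therefore the affine transformation used to build $P*_i Q$ does not depend on $Q$ at all: it is the unique orientation-preserving $T_i^P\in Aff_+(\R^2)$ with $T_i^P(e)=\big(P_i,\ P_{i,i+1},\ P_{i+1}\big)$, where as in Lemma \ref{LinArrLemma} the points are $P_i=p_0\cap p_i$, $P_{i,i+1}=p_i\cap p_{i+1}$, $P_{i+1}=p_0\cap p_{i+1}$. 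So on $\L$ one simply has $P\circ_i Q = P\cup T_i^P(Q)$ with $T_i^P$ a $Q$-independent affine bijection. Since affine bijections carry lines to lines and commute with unions, both associativity checks reduce to identifying compositions of such maps.

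For sequential associativity \eqref{assoc1}, fix $P\in\L(k)$, $R\in\L(l)$, $S\in\L(m)$, $1\leq i\leq k$, $i\leq j\leq i+l-1$, and put $j'=j-i+1$. On the right-hand side, $R\circ_{j'}S$ is again normalized, so its distinguished triple is again $e$ and the outer insertion again uses $T_i^P$, yielding $P\cup T_i^P(R)\cup\big(T_i^P\circ T_{j'}^R\big)(S)$. On the left-hand side, the constraint $i\leq j\leq i+l-1$ forces the $j$-th gap of $P\circ_i R$ to lie within the inserted copy of $R$, so its defining triple is obtained by applying $T_i^P$ componentwise to the defining triple of the $j'$-th gap of $R$; by uniqueness of the frame map, the transformation used to insert $S$ is then $T_i^P\circ T_{j'}^R$, and the left-hand side equals $\big(P\cup T_i^P(R)\big)\cup\big(T_i^P\circ T_{j'}^R\big)(S)$ as well. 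It then remains to check that the line-relabelling of Lemma \ref{LinArrLemma} produces the same indexing on both sides, which it does.

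For parallel associativity \eqref{assoc2}, with $1\leq i<j\leq k$ the gaps $i$ and $j$ of $P$ are disjoint, so inserting $R$ at gap $i$ leaves the two lines $p_j,p_{j+1}$ bounding the re-indexed gap $j$ untouched; hence the frame map for inserting $S$ is still $T_j^P$, and $(P\circ_i R)\circ_{j+l-1}S = P\cup T_i^P(R)\cup T_j^P(S)$. By the symmetric argument, inserting $S$ at gap $j$ leaves $p_i,p_{i+1}$ untouched, so $(P\circ_j S)\circ_i R = P\cup T_i^P(R)\cup T_j^P(S)$ as well, and the index relabellings again agree. Finally the entries of the matrix $T_i^P$ are rational functions of the coordinates of $P$ whose denominators do not vanish, by non-collinearity of the defining triples, so the $(-\circ_i-)$ are continuous and $\L$ is a non-$\Sigma$ operad in $\Top$. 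I expect the geometry to be essentially free once the $Q$-independence of $T_i^P$ is noticed; the one place calling for care is the purely combinatorial index bookkeeping — confirming the two sides of each axiom agree as \emph{indexed} arrangements — and within it the boundary cases $j=i$ and $j=i+l-1$ in \eqref{assoc1}, and $j=i+1$ in \eqref{assoc2}, where the relevant gap abuts one of the three lines that are re-used rather than freshly inserted.
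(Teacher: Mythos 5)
Your proof is correct. It differs from the paper's in packaging rather than substance: the paper encodes each $P\in\L(n)$ as the tuple $(T_1,\dots,T_n)$ of exactly your frame maps $T_i^P$ (the unique affine maps sending the template triple to $(P_i,P_{i+1},P_{i,i+1})$), identifies $\L$ with the subset of the word operad $\W G$ cut out by the matching condition $T_i\tau_R=T_{i+1}\tau_L$ and the normalization $T_1\tau_L=\tau_L$, $T_n\tau_R=\tau_R$, and then verifies only that this subset is closed under the word-operad composition $(T_1,\dots,T_n)\circ_i(S_1,\dots,S_m)=(T_1,\dots,T_{i-1},T_iS_1,\dots,T_iS_m,T_{i+1},\dots,T_n)$; the two associativity axioms are then inherited for free from $\W G$. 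Your central observation --- that the insertion map is $Q$-independent because every normalized arrangement shares the triple $\bigl((0,0),(0,1),(1,0)\bigr)$, and that frame maps compose as $T_j^{P\circ_i R}=T_i^P\circ T_{j-i+1}^R$ --- is precisely the geometric content of that tuple formula; you then verify \eqref{assoc1} and \eqref{assoc2} by hand, including the boundary cases, where the paper outsources this to the already-established associativity of the word operad. The paper's route buys economy (no case analysis of the axioms, no index bookkeeping beyond the tuple formula); yours buys a self-contained argument that makes the mechanism visible. Your closure check (the lines $t=0$, $q=0$, $q+t=1$ are never displaced since $1\le i\le m$) is the geometric counterpart of the paper's verification that the normalization conditions \eqref{NormCond2} propagate through \eqref{TupleComp}. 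I see no gaps.
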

\begin{proof}
We will identify $\L$ as a suboperad of a certain word operad $\W G$, similarly to how we did it for static particle configurations.
The relevant group for us is the four-dimensional abelian subgroup $G$ of $Aff(\R^2)$ that can be identified with the group of matrices of the form 
$
T=
\left[
\begin{array}{ccc}
  *  & *  & * \\
   0  &  * & 0 \\
   0  & 0  & 1
\end{array}
\right]$, where all the entries, except possibly $T_{1,2}$, are non-negative, and $T_{1,1}>0$.

Consider a fixed template -- the planar rooted line arrangement of rank $2$ consisting of the lines $t=0$, $q=0$, $q+t=1$
that we denote by $\tau_0$, $\tau_L$ and $\tau_R$ respectively.
Then any planar rooted line arrangement $P\in \L(n)$ can be uniquely encoded by the tuple $(T_1,\dots,T_n)$, where $T_i$ is the affine map sending points $(0,0)$, $(1,0)$, $(0,1)$ to $P_i$, $P_{i+1}$ and $P_{i,i+1}$ respectively; we use the notation as in lemma \ref{LinArrLemma}.
That is, $T_i$ keeps track on the point on intersection of the lines $p_i$ and $p_{i+1}$.
\begin{center}
\begin{figure}[H]
\resizebox{0.7\textwidth}{!}{\begin{tikzpicture}

\coordinate (P1) at (0, 0);
\coordinate (P2) at (6, 0);
\coordinate (P3) at (0, 6);

\coordinate (Q1) at (12, 0);
\coordinate (Q2) at (14, 0);
\coordinate (Q3) at (16, 0);
\coordinate (Q4) at (18, 0);

\clip (-1,-0.7) rectangle (19,9);

% \coordinate (PP1) at (5.5, 2);
% \coordinate (PP2) at (7, 3);
% \coordinate (PP3) at (8, 4);
% \coordinate (PP4) at (6, 4);

\draw [->,shorten >= -1cm, dashed, blue, very thick, name path=root_line] (-1,0) -- (P2);
 \draw[draw=none, shorten <= -1.5cm, shorten >= -1.5cm, name path=line1] (P1)  -- node[below, font=\Huge] {$\tau_0$}  (P2);
\draw[very thick, shorten <= -1.5cm, shorten >= -1.5cm, name path=line2] (P1) -- node[left, font=\Huge] {$\tau_L$} (P3);
\draw[very thick,  shorten <= -1.5cm, shorten >= -1.5cm, name path=line3] (P3) -- node[right, font=\Huge] {$\tau_R$} (P2);

\fill[red] (P1) circle (4pt) node[below right,font=\huge, text=black] {$1$};
\fill[red] (P2) circle (4pt) node[below right,font=\huge, text=black] {$2$};
\fill[red] (P3) circle (4pt);

\draw [->,shorten >= -1.5cm, dashed, blue, very thick, name path=root_line] (Q1)++(-1,0) -- (Q4);
\draw[very thick, shorten <= -1.5cm, name path=line1] (Q1)  -- +(90: 10cm);
\draw[very thick, red, shorten <= -1.5cm, name path=line2] (Q2) -- +(100: 10cm);
\draw[very thick, red, shorten <= -1.5cm, name path=line3] (Q3) -- +(125: 11cm);
\draw[very thick, shorten <= -1.5cm, name path=line4] (Q4) -- +(135: 13cm);

\foreach \i in {1,...,4}
{
\path[name intersections={of=root_line and line\i,by=L\i}];
\fill[blue] (L\i) circle (4pt);
\node[below right, font=\huge] at (L\i) {$\i$};
}
\path[name intersections={of=line2 and line3,by=L23}];
\path[name intersections={of=line4 and line3,by=L34}];
\fill[red] (L23)circle (4pt);
\fill[black] (L34)circle (4pt);
\fill[red] (L2)circle (4pt);
\fill[red] (L3)circle (4pt);

\draw[->,>=Latex,thick](5,3) -- node[above, font=\Huge] {$T_i$} (9,3);

\end{tikzpicture} }
\end{figure}
\end{center}
Concretely, 
$T_i=
\left[
\begin{array}{ccc}
   a_{i+1}-a_i  & b_i-a_i  & a_i \\
   0  &  c_i & 0 \\
   0  & 0  & 1
\end{array}
\right]$ for $P_i=(a_i,0)$, $P_{i+1}=(a_{i+1},0)$, $P_{i,i+1}=(b_i,c_i)$.
Note that the $T_i$'s are dependent. Namely, any tuple $(T_1,\dots,T_n)$ encoding a planar rooted line arrangement is subject to the matching and normalization conditions akin to \eqref{MatchCond}, \eqref{NormCond}.
The matching condition reads
\begin{align}
\label{MatchCond2}
    T_i \tau_R = T_{i+1}\tau_L,\quad 1\leq i<n. 
\end{align}
Equivalently, it can be restated as collinearity of points
$(a_{i+1},0), (b_i,c_i), (b_{i+1},c_{i+1})$,
where the last two points can be the same.
The normalization condition amounts to 
\begin{align}
\label{NormCond2}
T_1\tau_L=\tau_L, T_n\tau_R=\tau_R.
\end{align}
Preservation of the root line, $T_i\tau_0=\tau_0$, is automatic by the form of $T_i$.
Such a tuple $(T_1,\dots,T_n)$ is an element of the word operad $\W G$, where $G$ is as defined above. The geometric definition of a composition $P*_i Q$ of planar rooted line arrangements translates to
\begin{align}
\label{TupleComp}
(T_1,\dots,T_n)\circ_i (S_1,\dots, S_m)=
(T_1,\dots, T_{i-1},T_iS_1,\dots, T_i S_m, T_{i+1},\dots, T_n)
\end{align}
in terms of tuples in $\W G$. It remains to show that the set of all tuples subject to the matching and normalization conditions stated above is closed in $\W G$.

The matching conditions for the subranges $T_1,\dots,T_{i-1}$ and $T_{i+1},\dots,T_n$ on the right-hand side of \eqref{TupleComp} follow immediately from the matching conditions for $(T_1,\dots,T_n)$. The matching condition for $T_i S_1,\dots, T_i S_m$ follows upon multiplying sides of $S_j \tau_R = S_{j+1}\tau_L$ for $1\leq j<m$
by $T_i$ on the left. In the remaining two edge cases, by matching of $T_i$'s and the normalization condition for $S_1,\dots,S_n$ we have
\[
T_{i-1}\tau_R=T_{i}\tau_L=T_{i}(S_1\tau_L)=(T_iS_1)\tau_L
\]
and similarly for the adjacent entries $T_i S_m, T_{i+1}$.
The normalization condition on the right-hand side of \eqref{TupleComp} follows directly from the one holding for $T_1,T_n$ and $S_1,S_m$.
\end{proof}
\begin{proof}[Proof (of theorem \ref{CompThm}).]
The argument is analogous to the one used for non-$\Sigma$ operad $\widehat C$ of static particle configurations discussed in the preceding section.
First, note that any planar rooted line arrangement $P\in \widehat \L(n)$ is determined by a tuple $(T_1,\dots,T_n)\in \W G(n)$ subject to the matching condition \eqref{MatchCond2}, but with no regard for the normalization \eqref{NormCond2}. The tuples of this form are not closed under the partial compositions of the word operad $\W G$. A correction is provided by proposition \ref{MovingFrameLemma} upon relating $\widehat \L$ to non-$\Sigma$ operad $\L$ of normalized planar rooted line arrangements defined above.

Indeed, since \eqref{MatchCond2} remains valid upon multiplying both sides by any $g\in G$ on the left, $\widehat \L(n)$ is naturally acted on by $G$ for any $n\geq 1$. The moving frame $\rho_n: \widehat \L(n)\to G$ that normalizes $\widehat \L$ onto $\L$ is defined by assigning to $P\in \widehat\L(n)$ the unique affine transform that sends points $(0,0)$, $(1,0)$, $(0,1)$ to $P_1$,$P_{n+1}$ and $P_{1,n+1}$ respectively. Note that it always exists and can be worked out explicitly as a $3$-by-$3$ matrix, even though the resulting formula is cumbersome. Now, an application of proposition \ref{MovingFrameLemma} extends the non-$\Sigma$ operad structure provided by lemma \ref{NormLemma} from $\L$ to $\widehat \L$. The geometric description of $P*_i Q$ stated before matches \eqref{MovingFrameComp} provided by proposition \ref{MovingFrameLemma}.
\end{proof}
\begin{corollary}
 Let $F$ be any planar rooted line arrangement. Then all planar rooted line arrangements containing $F$ as a subarrangement form a suboperad of $\hL$. Indeed, since a partial composition $P*_i Q$ amounts to only drawing new lines on $P$, then $F\subset P$ persists in $P*_i Q$.
\end{corollary}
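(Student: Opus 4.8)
The plan is to isolate the candidate subcollection, observe that it is a subobject of $\hL$ in $\Top$ arity by arity, and then check that it is closed under every partial composition $(-*_i-)$; the associativity axioms \eqref{assoc1}--\eqref{assoc2} are then inherited from Theorem~\ref{CompThm} with no further work.

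First I would fix notation: let $k\geq 1$ be such that $F\in\hL(k)$, and let $f_0$ denote the root line of $F$. For $n\geq 1$ put
\[
\hL_F(n):=\bigl\{\,P\in\hL(n)\ \big|\ \text{the root of }P\text{ is }f_0\text{, and every line of }F\text{ is a line of }P\,\bigr\},
\]
so that a member of $\hL_F(n)$ is exactly an arrangement containing $F$ as a \emph{root-preserving} subarrangement. Two remarks make the rest routine. First, any subset of the lines of a planar rooted line arrangement that contains its root, taken with that root, is again a planar rooted line arrangement: conditions (1) and (2) of Definition~\ref{PRLDef} are conditions on pairs of lines (resp.\ on the induced orientations) and hence survive restriction, and condition (3) survives because a subsequence of an increasing sequence is increasing. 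Second, $\hL_F(n)=\varnothing$ whenever $n<k$, because then $P$ has too few lines; this is harmless, since a collection with empty components is still a non-$\Sigma$ operad, the identities \eqref{assoc1}--\eqref{assoc2} holding vacuously over those arities.

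Next I would note that each $\hL_F(n)$ carries the subspace topology inherited from $\hL(n)$, hence is an object of $\Top$, and that the maps $(-*_i-)$ of $\hL$, being continuous by Theorem~\ref{CompThm}, restrict to continuous maps on $\hL_F(m)\times\hL_F(n)$. The substantive step is closure, and it is precisely the sentence in the statement made precise. Let $P\in\hL_F(m)$, $Q\in\hL_F(n)$ and $1\leq i\leq m$; I must show $P*_i Q\in\hL_F(m+n-1)$. By Lemma~\ref{LinArrLemma}, together with the identification of $(-*_i-)$ with the formula \eqref{MovingFrameComp} carried out in the proof of Theorem~\ref{CompThm}, one has $P*_i Q=P\cup T_i(Q)$ for an affine map $T_i$ that carries the root of $Q$ onto the root $p_0$ of $P$. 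In particular every line of $P$ is a line of $P*_i Q$, and $p_0$ is again the root of $P*_i Q$. Since $F\subseteq P$ root-preservingly and the root of $P$ is $f_0$, it follows that $F\subseteq P*_i Q$ root-preservingly, i.e.\ $P*_i Q\in\hL_F(m+n-1)$. Observe that only the hypothesis $F\subseteq P$ was used: a composition on the left slot can only \emph{add} lines to $P$, so every root-preserving subarrangement of $P$ survives.

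Hence $\hL_F=\{\hL_F(n)\}_{n\geq 1}$ is a subcollection of $\hL$ in $\Top$, closed under all $(-*_i-)$, and inheriting \eqref{assoc1}--\eqref{assoc2} verbatim; therefore it is a non-$\Sigma$ suboperad of $\hL$. I do not expect any real obstacle here; the only points that demand care are bookkeeping ones — insisting that the inclusion $F\subseteq P$ be root-preserving, so that membership is stable under composition, and being comfortable with $\hL_F$ having empty components in low arity.
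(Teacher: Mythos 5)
Your proposal is correct and follows essentially the same route as the paper, whose entire argument is the single observation that $P*_iQ=P\cup T_i(Q)$ only adds lines to $P$, so any subarrangement $F\subseteq P$ persists in the composite. Your additional bookkeeping (insisting the inclusion be root-preserving, noting that empty low-arity components are harmless, and that associativity and continuity are inherited by restriction) is a sound and welcome elaboration of the same idea rather than a different method.
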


By identifying a rank-$n$ planar rooted line arrangement with the data of a purely elastic scattering process of $n$ particles, the construction of the $\hL$ yields an example of an operad structure on initial particle configurations that we have been after. 
By theorem \ref{CompThm}, to compute a composition of initial particle configurations $a\circ_i b$, we switch to the normalized rest frame of the first particle $a_1$, do the work there according to the affine-geometric rules prescribed by operad $\L$, and return the answer back covariantly in the reference frame of $a$. This is what $\hL$ does. Naturally, thanks to the natural $G$-action on $\hL$, the result can be returned in the reference frame of $b$, or elsewhere.
\begin{lemma}
\label{YBLemma}
Non-$\Sigma$ operad $\widehat \L$ is generated by the symmetry group $\widehat \L(1)=G$ and three topological families of binary operations corresponding to three possible combinatorial types of rank-3 arrangements, shown below.
\end{lemma}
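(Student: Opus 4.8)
The plan is to prove the lemma in three stages: (i) every element of $\widehat\L$ is an iterated $*_i$-composite of elements of $\widehat\L(1)$ and $\widehat\L(2)$, by induction on the arity; (ii) $\widehat\L(2)$ is in turn generated by $\widehat\L(1)=G$ and the \emph{normalized} rank-$3$ arrangements $\L(2)$, via the moving frame; (iii) $\L(2)$ splits into exactly three combinatorial families. The real content is (i) and (iii); I expect the only nuisance to be reconciling the naive ``draw in the missing line'' picture of a composition with the moving-frame formula \eqref{MovingFrameComp}, which is why I would lean on the tuple model of Lemma \ref{NormLemma} for the bookkeeping.

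For (i), take $P=\{\ell_0,\dots,\ell_{n+1}\}\in\widehat\L(n)$ with $n\ge 3$ and delete the interior line $\ell_2$. The arrangement $P\setminus\ell_2=\{\ell_0,\ell_1,\ell_3,\dots,\ell_{n+1}\}$ still meets the three conditions of Definition \ref{PRLDef} (deleting a line only weakens each of them), so $P\setminus\ell_2\in\widehat\L(n-1)$; let $B=\{\ell_0,\ell_1,\ell_2,\ell_3\}\in\widehat\L(2)$ be the rank-$3$ subarrangement spanned by $\ell_1,\ell_2,\ell_3$. I would then check that $P=(P\setminus\ell_2)*_1 B$. The point is that the first two non-root lines of $P\setminus\ell_2$, and the two outer non-root lines of $B$, are in both cases $\ell_1$ and $\ell_3$; hence the ordered triple of intersection points governing the slot-$1$ composition is $(L_1,\ \ell_1\cap\ell_3,\ L_3)$ on both sides, the affine map $T_i$ of Lemma \ref{LinArrLemma} is the identity, and $(P\setminus\ell_2)*_1 B=(P\setminus\ell_2)\cup B=P$, with the re-labelling prescribed in Lemma \ref{LinArrLemma} restoring the original indexing. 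Descending on $n$, and with the trivial base cases $n=1,2$, this exhibits every $P$ as a composite of arity-$1$ and arity-$2$ elements. The clean way to run the same step is in the tuple model: $(T_1,\dots,T_n)=(S,T_3,\dots,T_n)\circ_1(S^{-1}T_1,\ S^{-1}T_2)$, where $S\in G$ is the datum of the $\ell_1,\ell_3$-slot, and the matching condition \eqref{MatchCond2} for each factor follows from that for $P$.

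For (ii): for any $B\in\widehat\L(2)$ one has $B=\rho(B)\cdot(\rho(B)^{-1}B)$ with $\rho(B)\in G=\widehat\L(1)$ and $\rho(B)^{-1}B\in\L(2)$ (since $\rho$ of the second factor is $e$). Unwinding \eqref{MovingFrameComp} and using that $\L(1)$ consists of the single normalized arrangement $\tau$, which acts as a unit, this becomes $B=(\rho(B)\cdot\tau)*_1(\rho(B)^{-1}B)$, so $\widehat\L(1)\cup\L(2)$ already generates $\widehat\L(2)$, hence, with (i), all of $\widehat\L$.

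For (iii): attach to a rank-$3$ arrangement the three heights $h_{ij}$ above the root of the intersection points $X_{ij}=\ell_i\cap\ell_j$. With $p_i=\operatorname{ctg}\theta_i$ (so $p_1>p_2>p_3$ and $L_1<L_2<L_3$), a one-line computation gives $h_{ij}=(L_j-L_i)/(p_i-p_j)>0$, so
\[
h_{13}=\frac{(L_2-L_1)+(L_3-L_2)}{(p_1-p_2)+(p_2-p_3)}
\]
is the mediant of the positive fractions $h_{12}$ and $h_{23}$ and therefore lies strictly between them unless $h_{12}=h_{23}$; moreover any two of $X_{12},X_{13},X_{23}$ lie on a common non-root line, so two of the three heights can coincide only when all three points coincide. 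Hence exactly three order types of $(h_{12},h_{13},h_{23})$ occur — $h_{12}<h_{13}<h_{23}$, the degenerate $h_{12}=h_{13}=h_{23}$, and $h_{23}<h_{13}<h_{12}$ — namely the two generic configurations (the middle line crossing the other two below, respectively above, their mutual intersection) and the triple crossing. Since a root-fixing affine map rescales every height by one common positive factor, the order type is $G$-invariant and so cuts $\L(2)$ into three connected strata; in the coordinates $(L_2,\theta_2)$ on $\L(2)$ — an open cell, once $\ell_0=\{t=0\}$, $\ell_1=\{q=0\}$, $\ell_3=\{q+t=1\}$ are fixed — these are two open regions separated by the arc $h_{12}=h_{23}$. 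These are the three topological families of binary generators, and combined with (i) and (ii) they prove the lemma.
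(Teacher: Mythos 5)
Your proposal is correct and takes essentially the same route as the paper's proof: peel off one interior non-root line so that the affine map of Lemma \ref{LinArrLemma} is the identity (you delete $\ell_2$ and compose at slot $1$; the paper deletes the penultimate line $p_n$ and composes at slot $n-1$), then induct on the arity. Your stage (iii) mediant argument for the trichotomy of rank-$3$ combinatorial types, and the moving-frame bookkeeping in stage (ii), are careful verifications of points the paper's proof simply asserts.
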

\begin{center}
  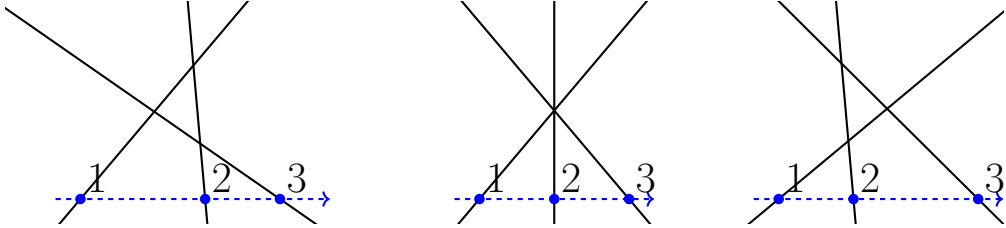
\begin{figure}[H]  
  \resizebox{0.8\textwidth}{!}{\begin{tikzpicture}

% \coordinate (A) at (0,0);
% \coordinate (B) at (3,3);
% \draw [name path=A--B] (A) -- (B);
% \coordinate (C) at (3,0);
% \coordinate (D) at (0,1);
% \draw [name path=C--D] (C) -- (D);
% \path [name intersections={of=A--B and C--D,by=E}];
% \node [fill=red,inner sep=1pt,label=-90:$E$] at (E) {};

\coordinate (P1) at (0.5, 0);
\coordinate (P2) at (3, 0);
\coordinate (P3) at (4.5, 0);

\coordinate (Q1) at (14.5, 0);
\coordinate (Q2) at (16, 0);
\coordinate (Q3) at (18.5, 0);

\coordinate (R1) at (8.5, 0);
\coordinate (R2) at (10, 0);
\coordinate (R3) at (11.5, 0);

\clip (-1,-0.5) rectangle (19,4);

% \coordinate (PP1) at (5.5, 2);
% \coordinate (PP2) at (7, 3);
% \coordinate (PP3) at (8, 4);
% \coordinate (PP4) at (6, 4);

\draw [->,shorten >= -1cm, dashed, blue, very thick, name path=root_line] (0,0) -- (P3);
\draw[very thick, shorten <= -1.5cm, name path=line1] (P1)  -- +(50: 10cm);
\draw[very thick, shorten <= -1.5cm, name path=line2] (P2) -- +(95: 8cm);
\draw[very thick, shorten <= -1.5cm, name path=line3] (P3) -- +(145: 7cm);

\foreach \i in {1,...,3}
{
\path[name intersections={of=root_line and line\i,by=L\i}];
\fill[blue] (L\i) circle (3pt);
\node[above right] at (L\i) {\Huge$\i$};
%\node[below left=0.1cm] at (L\i) {$\ell_\i$};
}

\draw [->,shorten >= -0.5cm, dashed, blue, very thick, name path=root_line] (8,0) -- (R3);
\draw[very thick, shorten <= -1.5cm, name path=line1] (R1)  -- +(50: 10cm);
\draw[very thick, shorten <= -1.5cm, name path=line2] (R2) -- +(90: 8cm);
\draw[very thick, shorten <= -1.5cm, name path=line3] (R3) -- +(130: 7cm);

\foreach \i in {1,...,3}
{
\path[name intersections={of=root_line and line\i,by=L\i}];
\fill[blue] (L\i) circle (3pt);
\node[above right] at (L\i) {\Huge$\i$};
%\node[below left=0.1cm] at (L\i) {$\ell_\i$};
}

\draw [->,shorten >= -0.5cm, dashed, blue, very thick, name path=root_line] (14,0) -- (Q3);
\draw[very thick, shorten <= -1.5cm, name path=line1] (Q1)  -- +(40: 10cm);
\draw[very thick, shorten <= -1.5cm, name path=line2] (Q2) -- +(95: 8cm);
\draw[very thick, shorten <= -1.5cm, name path=line3] (Q3) -- +(135: 7cm);

\foreach \i in {1,...,3}
{
\path[name intersections={of=root_line and line\i,by=L\i}];
\fill[blue] (L\i) circle (3pt);
\node[above right] at (L\i) {\Huge$\i$};
%\node[below left=0.1cm] at (L\i) {$\ell_\i$};
}

\end{tikzpicture} }
  \caption  
  {Combinatorial types of rank-$3$ arrangements.}
  \label{ybconf}
  \end{figure}
  \end{center}
\begin{proof}
The proof amounts to observing that evaluating a partial composition $Q *_i M$ with either one of the line arrangements $M$ on figure \ref{ybconf} amounts to drawing a new line in between the lines $q_i$ and $q_{i+1}$ of $Q$. Any diagram with $n>2$ lines can be produced in this way by starting with a planar root arrangement of rank $2$ and consecutively adding more lines one by one.
More formally, let $P\in \widehat \L(n)$ for $n>2$ and ${P'=\{p_0,p_1,\dots,p_{n-1},p_{n+1}\}}$ be the arrangement obtained by removing the penultimate line $p_n$ of $P$. Then $P'\in\hL(n-1)$, and $P=P'*_{n-1} M$, where $M=\{p_0,p_{n-1},p_n,p_{n+1}\}$. The latter is necessarily of one of the above forms, and induction applied to $P'$ yields the result.
\end{proof}

\section{Symmetries of $\hL$ and factorized scattering}
Let us address the question of what symmetries can non-$\Sigma$ operad of scattering diagrams $\hL$ support. 
In what follows, $G$ denotes the subgroup of $Aff(\R)$ described in lemma \ref{NormLemma}.
First, there is a global gauge action by adjunction $g\hL g^{-1}$, where $g\in Aff(\R^2)$ acts on each $\hL(n)\in G^n\leq Aff(\R^2)$ diagonally; cf. lemma~\ref{TilingOpProp}.
Also, partial compositions commute with translations along the $q$-axis, or for that matter, with multiplication action by $g\in G$.
In our non-relativistic setting the crossing symmetry does not arise.
As far as the CPT goes, no charge conjugation is defined, since the lines of a planar rooted arrangements as such are not assumed to have any internal degrees of freedom. 
That said, one may consider a semidirect product $\hL \rtimes H$, where $H$ is the symmetry group of a particle. Note that in this setting the charges would be defined up to a global phase. Indeed, an element of $(\hL \rtimes H)(n)$ is a rank-$(n+1)$ line arrangement with $n$ values of $H$ distributed among $n+1$ lines (particles).

The $P$-symmetry amounts to switching the direction of the root line, or equivalently, to flipping a diagram about the $t$-axis. 
The equivariance with respect to this involution $\pi$ reads $\pi(P *_i Q)=\pi(P)*_{m-i+1}\pi(Q)$ for any $P\in\hL(m)$, $Q\in \hL(n)$, $1\leq i\leq m$. In fact, such a symmetry exists in any word operad $\W M$, where $\pi$ reverses a word.

The $T$-symmetry can be introduced upon extending our definition of a planar rooted line arrangements to configurations where line intersections may appear in both the upper and the lower half-planes (but not on the root line). The symmetry group $G$ should be modified appropriately; the construction of $\hL$ remains the same. The time reversal $\tau$ amounts, in such a setting, to flipping a diagram vertically about the $q$-axis. The equivariance reads $\tau(P\circ_i Q)=\tau(P)\circ_{m-i+1}\tau(Q)$ for $P$,$Q$ as above. 

A crucial property for integrability of both classical and quantum pure elastic scattering processes are good covariance properties of scattering data with respect to  \emph{Baxter's $Z$-symmetry}. By definition, $Z$-symmetry on $P\in \hL(n)$ amounts to shifting any line of $P$ parallel to itself without violating the ordered momenta condition. 
For instance, different combinatorial types of rank-$3$ arrangements, as shown on figure \ref{ybconf}, can be related to each other by such a transformation. More generally, using $Z$-symmetry, any rank-$(n+1)$ arrangement $P\in \hL(n)$ can be transformed to an arrangement of $n+1$ lines intersecting at a single point. We can describe all of them in the following way. Let $\ell_0$ be a root line and $A$ be a point in the upper half-plane. Since affine maps preserve intersections, the space $\widehat{L}_A$ of all planar rooted line arrangements consisting of lines intersecting at $A$ is a suboperad of $\widehat L$. Moreover, since affine maps preserve distance ratios along a single line, $\widehat{L}_A$ is isomorphic to the operad of static particle configurations $\widehat\C$. Indeed, all the momenta can be recovered from the particle positions and $A$.
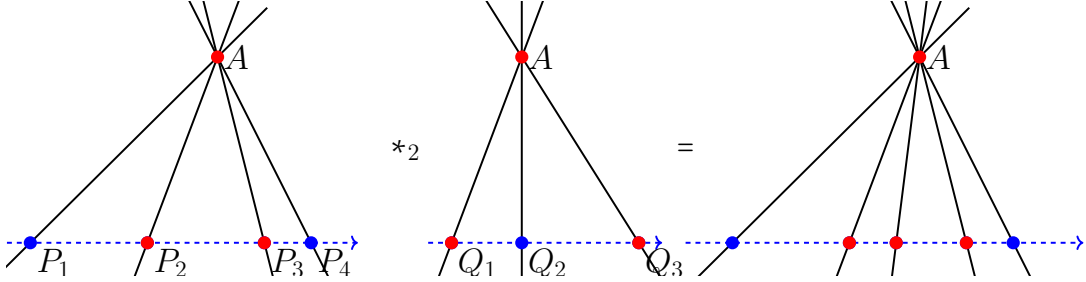
\begin{figure}[H]
\centering
\resizebox{0.9\textwidth}{!}{\begin{tikzpicture}

% \coordinate (A) at (0,0);
% \coordinate (B) at (3,3);
% \draw [name path=A--B] (A) -- (B);
% \coordinate (C) at (3,0);
% \coordinate (D) at (0,1);
% \draw [name path=C--D] (C) -- (D);
% \path [name intersections={of=A--B and C--D,by=E}];
% \node [fill=red,inner sep=1pt,label=-90:$E$] at (E) {};

% \coordinate (P1) at (0.5, 0);
% \coordinate (P2) at (3, 0);
% \coordinate (P3) at (4.5, 0);

\coordinate (P1) at (-0.5, 0);
\coordinate (P2) at (2, 0);
\coordinate (P3) at (4.5, 0);
\coordinate (P4) at (5.5, 0);

\clip (-1,-0.7) rectangle (23,5.2);

% \coordinate (PP1) at (5.5, 2);
% \coordinate (PP2) at (7, 3);
% \coordinate (PP3) at (8, 4);
% \coordinate (PP4) at (6, 4);

\coordinate (ZP) at (3.5, 4);
\draw [->,shorten >= -1cm, dashed, blue, very thick, name path=root_line] (-1,0) -- (P4);
\draw[very thick, shorten <= -1.5cm, shorten >= -1.5cm, name path=line1] (P1)  -- (ZP);
\draw[very thick, shorten <= -1.5cm, shorten >= -1.5cm,name path=line2] (P2) -- (ZP);
\draw[very thick,  shorten <= -1.5cm, shorten >= -1.5cm,name path=line3] (P3) -- (ZP);
\draw[very thick, shorten <= -1.5cm, shorten >= -1.5cm, name path=line4] (P4) -- (ZP);

\foreach \i in {1,...,4}
{
\path[name intersections={of=root_line and line\i,by=L\i}];
\fill[blue] (L\i) circle (4pt);
\node[below right, font=\huge] at (L\i) {$P_\i$};
%\node[below left=0.1cm] at (L\i) {$\ell_\i$};
}

\path[name intersections={of=line2 and line3,by=L23}];
\fill[red] (L23) circle (4pt);
\node[right, font=\huge] at (L23) {$A$};
\fill[red] (L2)circle (4pt);
\fill[red] (L3)circle (4pt);

\node at (7.5,2) {\huge$*_2$};

\coordinate (R1) at (8.5, 0);
\coordinate (R2) at (10, 0);
\coordinate (R3) at (12.5, 0);

\coordinate (ZR) at (10, 4);
\draw [->,shorten >= -0.5cm, dashed, blue, very thick, name path=root_line] (8,0) -- (R3);
\draw[very thick, shorten <= -1.5cm, shorten >= -1.5cm, name path=line1] (R1)  -- (ZR);
\draw[very thick, shorten <= -1.5cm, shorten >= -1.5cm,name path=line2] (R2) -- (ZR);
\draw[very thick,  shorten <= -1.5cm, shorten >= -1.5cm,name path=line3] (R3) -- (ZR);

\foreach \i in {1,...,3}
{
\path[name intersections={of=root_line and line\i,by=L\i}];
\fill[blue] (L\i) circle (4pt);
\node[below right, font=\huge] at (L\i) {$Q_\i$};
%\node[below left=0.1cm] at (L\i) {$\ell_\i$};
}
\path[name intersections={of=line1 and line3,by=L13}];
\fill[red] (L13)circle (4pt);
\node[right, font=\huge] at (L13) {$A$};
\fill[red] (L1)circle (4pt);
\fill[red] (L3)circle (4pt);

\node at (13.5,2) {\huge$=$};

% \coordinate (P1) at (-0.5, 0);
% \coordinate (P2) at (2, 0);
% \coordinate (P3) at (4.5, 0);
% \coordinate (P4) at (5.5, 0);

\coordinate (Q1) at (14.5, 0);
\coordinate (Q2) at (17, 0);
\coordinate (Q3) at (18, 0);
\coordinate (Q4) at (19.5, 0);
\coordinate (Q5) at (20.5, 0);

\coordinate (ZQ) at (18.5, 4);
\draw [->,shorten >= -1.5cm, dashed, blue, very thick, name path=root_line] (Q1)++(-1,0) -- (Q5);
\draw[very thick, shorten <= -1.5cm, shorten >= -1.5cm, name path=line1] (Q1)  -- (ZQ);
\draw[very thick, shorten <= -1.5cm, shorten >= -1.5cm,name path=line2] (Q2) -- (ZQ);
\draw[very thick,  shorten <= -1.5cm, shorten >= -1.5cm,name path=line3] (Q3) -- (ZQ);
\draw[very thick, shorten <= -1.5cm, shorten >= -1.5cm,name path=line4] (Q4) -- (ZQ);
\draw[very thick,  shorten <= -1.5cm, shorten >= -1.5cm,name path=line5] (Q5) -- (ZQ);

\foreach \i in {1,...,5}
{
\path[name intersections={of=root_line and line\i,by=L\i}];
\fill[blue] (L\i) circle (4pt);
%\node[below right, font=\huge] at (L\i) {$\i$};
}
\path[name intersections={of=line2 and line4,by=L24}];
\fill[red] (L24)circle (4pt);
\fill[red] (L2)circle (4pt);
\fill[red] (L3)circle (4pt);
\fill[red] (L4)circle (4pt);

\node[right, font=\huge] at (L24) {$A$};

\end{tikzpicture} }
\caption  
{Evaluating $P *_2 Q\in \widehat \L_A(4)$ for $P\in \widehat \L_A(3)$, $Q\in \widehat \L_A(2)$.}
\end{figure}
Since parallel translations commute with affine maps, there is a surjective non-$\Sigma$ operad morphism ${p:\hL\to \hL_A\simeq \widehat \C}$ for any $A$ in the upper half-plane. This is the $Z$-symmetry transformation of $P$ into a configuration of lines intersecting at $A$ described above.
These symmetries are useful if we would like to study invariants of planar rooted line arrangements. In terms of our physical model, a quest for invariants can be formalized as follows.
\begin{definition}
 Let $C$ be a symmetric monoidal category subject to our usual technical constraints, and $\P$ be a non-$\Sigma$ operad in $C$. A \emph{$C$-valued} (1+1 dimensional purely elastic) \emph{scattering theory} is a non-$\Sigma$ operad morphism $s:\hL \to \P$ over $\Sets$. 
\end{definition}
By lemma \ref{YBLemma}, defining $s$ amounts to specifying $s$ on all planar rooted line arrangements of ranks $2$ and $3$. 
Line arrangements of rank $2$ is the monoid $\hL(1)$, which is identified with our affine symmetry group ${G\leq Aff(\R^2)}$. So $\P(1)$ contains a homomorphic image thereof.
To handle rank-$3$ arrangements, we can impose, for the sake of simplicity, the $Z$-symmetry and consider scattering theories that do not distinguish between $Z$-symmetric scattering diagrams of different combinatorial types shown on figure \ref{ybconf}. These are scattering theories where multiparticle collisions can be reduced to pairwise interactions.
Specifically, we will say that $s$ has the \emph{Yang-Baxter property}
if $s$ factorizes through $p$:
 \begin{center}
 \begin{tikzpicture}[node distance=2.5cm, auto, scale=0.5]
  \node (L) at (0,0) {$\hL$};
  \node (C) at (0,-3) {$\hL_A$};
  \node (P) at (3,0) {$\P$};

  \draw[->] (L) -- (P) node[midway, above] {$s$};
  \draw[->>] (L) -- (C) node[midway, left] {$p$};
  \draw[->, dashed] (C) -- (P) node[midway, above right] {};
\end{tikzpicture}
\end{center}
By the global gauge action on $\hL$ this is independent from the choice of $A$. As a concrete implementation, a $\kVect$-valued scattering theory in the above sense can be constructed from the data of an $R$-matrix, or equivalently, by recasting Zamolodchikov-Faddeev's algebras in operadic terms.

\section{Outlook}
A merit of the operadic approach to the subject is in its functoriality.
This becomes relevant if we are interested in generalizations and deformations of scattering theories.
In particular, it could be interesting to look for scattering theories $\hL \to \P$, where $\P$ is valued in a category with good homotopy properties, as well as to modify the source operad $\hL$, for instance, by considering pseudoline arrangements with some additional data that would yield a reasonable operad structure thereon. 
One may look for scattering theories associated to various combinatorial and algebraic invariants of line arrangements or configuration spaces of points on a line. We briefly elaborate on these points below.
\subsection{Related polytopes.}
A construction that we may naturally associate to a planar rooted line arrangement is a polytopal chain in a permutahedron. Indeed, in transmission presentation, a scattering process of $n$ particles can be thought of as a path in $\R^n$ with the braid hyperplane arrangement.
A permutahedron is a dual object thereof.

More specifically, for any $t>0$, the cross section of a scattering diagram by a horizontal line $t=const$ is, generically, a configuration of $n$ distinct points in $\R$, but there may (and will, for $n>1$) be up to $\frac{n(n-1)}{2}$ special values of $t$ corresponding to the time moments when a lines intersection, or a multiple lines intersections, take place for some $t$. Any collision pattern at moment $t$ is described by a partition $I(t)=\{I_1(t),I_2(t),\dots,I_k(t)\}$ of the set $\{1,2,\dots,n\}$, where a block $I_j(t)$ of cardinality $|I_j(t)|>2$ registers the fact of a simultaneous collision of particles $\alpha\in I_j(t)$ at moment $t$, or equivalently, represents a point of common intersection of the lines $\ell_\alpha$ for $\alpha\in I_j(t)$. In particular, generically, in absence of collisions, $I(t)$ is the finest partition $\{\{1\},\{2\},\dots,\{n\}\}$. The simultaneous collision of all $n$ particles is represented by the trivial single block partition.
In a general position, any planar rooted lines arrangement has $\frac{n(n-1)}{2}$ singular values of $t$, when $I(t)$ is a partition with $n-2$ blocks of size $1$ and one extra block $\{i, j\}$.

Furthermore, for any $t>0$, the partition $I(t)$ carries an extra structure, and in fact can be equipped with a natural total order on the set of blocks. Indeed, invoking the orientation on the ambient $\bbR^2$, we can define $I'<I''$ for $I',I''\in I(t)$ iff on the horizontal cross section of a scattering diagram for a given $t$ the intersection of lines indexed by $I'$ appears strictly to the left of the intersection of the lines indexed by $I''$. Now, the set of all ordered partitions of $\{1,2,\dots,n\}$ is a lattice that admits a polytopal realization as the face lattice of the $(n-1)$-dimensional permutahedron $P_{n-1}$. Indeed, this is one of the known compactifications of the configuration space of points in a closed interval \cite{Lambrechts}. Thus a scattering diagram gives rise to a polytopal chain in $P_{n-1}$ that starts at the vertex $(\{1\},\{2\},\dots,\{n\})$ and arrives at the vertex 
$(\{n\},\{n-1\},\dots,\{1\})$. 
For a rank $n$ planar rooted line arrangement in a general position, such a chain will traverse only the edges and the vertices of $P_{n-1}$. This corresponds to having strictly pairwise collisions. 
As an example, the three combinatorially distinct rank $3$ planar rooted line arrangements shown on figure  \ref{ybconf} give rise to three paths on a hexagon $P_2$ as shown on figure \ref{hex}.
\begin{center}
  \begin{figure}[H]  
  \resizebox{0.4\textwidth}{!}{\begin{tikzpicture}
   \newdimen\R
   \R=2.7cm

     \draw[->, red, line width=3pt] (180:\R) \foreach \x in {120,60,0} {  -- (\x:\R) };
     \draw[->, green, line width=3pt] (180:\R) \foreach \x in {240,300,0} {  -- (\x:\R) };
     \draw[->, cyan, line width=3pt] (180: \R) -- (0: \R);

    \draw (0:\R) 
    \foreach \x/\l/\p in 
    { 60/{$\{2,3\}$,1}/right,
      120/{2,$\{1,3\}$}/above,
      180/{$\{1,2\}$,3}/left,
      240/{$\{1,3\}$,2}/left,
      300/{3,$\{1,2\}$}/below,
      360/{3,$\{1,2\}$}/right
     }
    {  -- (\x:\R) node[midway,label={\p:\l}] {}};
   \foreach \x/\l/\p in
     { 60/{2,3,1}/above,
      120/{2,1,3}/above,
      180/{1,2,3}/left,
      240/{1,3,2}/below,
      300/{3,1,2}/below,
      360/{3,2,1}/right
     }
     \node[inner sep=1pt,circle,draw,fill,label={\p:\l}] at (\x:\R) {};

     \node at (0,0) {$\{1,2,3\}$};

\end{tikzpicture}}
  \caption  
  {Three chains in $P_2$.}
  \label{hex}
  \end{figure}
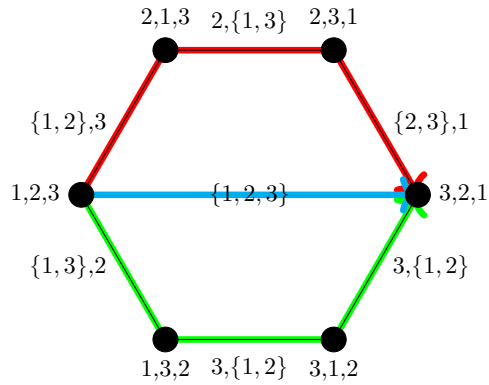
  \end{center}
The two sides of the Yang-Baxter identity $S_{12}S_{13}S_{23}=S_{23}S_{13}S_{12}$ are encoded by two paths, consisting of three edges each, connecting the vertices $\{1\}\{2\}\{3\}$ and $\{3\}\{2\}\{1\}$.  The scattering scenarios with only pairwise interactions are parametrized by reduced decompositions of the longest element $n\,(n-1)\dots 1$ of $\S_n$.
\subsection{Relation to the Swiss-cheese operad}
Given a line arrangement $P\in \hL(n)$, we define its \textit{upper envelope} $U_P$ as the unique polygonal chain made of the segments lying on the lines of $P$ such that all points of intersection of the lines of $P$ lie in the closed region bounded by $U_P$ and the root line. That is, constructing $U_P$ amounts to going along the lines of $P$, starting at point $P_1$ and finishing at $P_{n+1}$, in such a way that all points of intersection of the lines of $P$ remain on the right hand side with respect to the direction of motion.

For a given $P\in \hL(n)$, there exists a sufficiently small $\epsilon>0$ such that thickening each line segments of $U_P$ to a rectangular band of width $\epsilon$ results in a subset $P_\epsilon$ of $\mathbb{R}^2$ homeomorphic [this needs to be refined] to a half disk with $n$ half-disks at the bottom cut out and with $m$ holes in the bulk, where $m$ is the number of the connected bounded subsets of $\bbR^2\setminus P$. In particular, $m=\frac{n(n-1)}{2}$ if $P$ is in a general position.
\begin{center}
  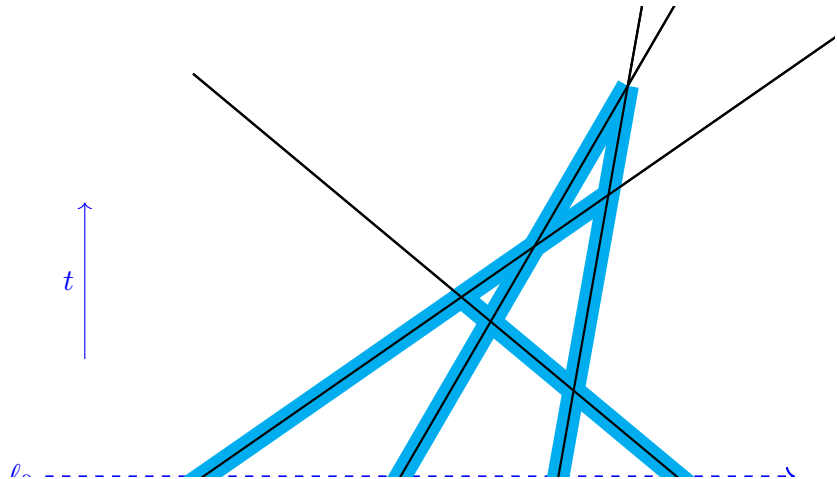
\begin{figure}[H]
  \resizebox{0.7\textwidth}{!}{\begin{tikzpicture}

% \coordinate (A) at (0,0);
% \coordinate (B) at (3,3);
% \draw [name path=A--B] (A) -- (B);
% \coordinate (C) at (3,0);
% \coordinate (D) at (0,1);
% \draw [name path=C--D] (C) -- (D);
% \path [name intersections={of=A--B and C--D,by=E}];
% \node [fill=red,inner sep=1pt,label=-90:$E$] at (E) {};

\coordinate (P1) at (2, 0);
\coordinate (P2) at (4.5, 0);
\coordinate (P3) at (6.5, 0);
\coordinate (P4) at (8, 0);

\clip (-1,0) rectangle (10,6);

% \coordinate (PP1) at (5.5, 2);
% \coordinate (PP2) at (7, 3);
% \coordinate (PP3) at (8, 4);
% \coordinate (PP4) at (6, 4);

\draw [->,shorten >= -1.5cm, dashed, blue, thick, name path=root_line] node[left] {$\ell_0$} (0,0) -- (P4);

\draw[thick, shorten <= -1.5cm, name path=line1] (P1)  -- +(35: 10cm);
\draw[thick, shorten <= -1.5cm, name path=line2] (P2) -- +(60: 8cm);
\draw[thick, shorten <= -1.5cm, name path=line3] (P3) -- +(80: 8cm);
\draw[thick, shorten <= -1.5cm, name path=line4] (P4) -- +(140: 8cm);

\draw[line width=8pt, cyan, shorten <= -1.5cm, name intersections={of=line1 and line3}] (P1)  -- (intersection-1);
\draw[line width=8pt, cyan, shorten <= -1.5cm, name intersections={of=line2 and line3}] (P2)  -- (intersection-1);
\draw[line width=8pt, cyan, shorten <= -1.5cm, name intersections={of=line3 and line2}] (P3)  -- (intersection-1);
\draw[line width=8pt, cyan, shorten <= -1.5cm, name intersections={of=line1 and line4}] (P4)  -- (intersection-1);

\draw[thick, shorten <= -1.5cm, name path=line1] (P1)  -- +(35: 10cm);
\draw[thick, shorten <= -1.5cm, name path=line2] (P2) -- +(60: 8cm);
\draw[thick, shorten <= -1.5cm, name path=line3] (P3) -- +(80: 8cm);
\draw[thick, shorten <= -1.5cm, name path=line4] (P4) -- +(140: 8cm);

\path[name path=upper_line] (0,5)--(10,5);
 
\draw [blue, ->] (0.5,1.5) -- node[left] {$t$} (0.5,3.5);

\end{tikzpicture} }
  \caption  
  {Morphing a planar rooted line arrangement of rank $4$ to an element of $\mathcal{SC}(3,3)$.}
  \end{figure}
  \end{center}
That is, $P\in \hL(n)$ gives rise (non-uniquely) to an element $P_\epsilon\in \mathcal{SC}(n,m)$ of the Swiss-cheese operad $\mathcal{SC}$\cite{Voronov}.

\label{CSWPSec}

%Tura syrlychtar, Tura syrlychtar, Tura syrlychtar! Tura an parachtar. Ey, pismelle. Amin.

\subsection*{Acknowledgments}
The author is grateful to Branislav Jur\v{c}o, Tom\'{a}\v{s} Proch\'{a}zka, Dmitry Roytenberg and Alexander Voronov for stimulating communications.
The work is supported by RVO:67985840 of the Institute of Mathematics of the Czech Academy of Sciences and the Praemium Academiae grant of M. Markl. 

\printbibliography

@book{LodayVallette,
  author    = {Jean-Louis Loday and Bruno Vallette},
  title     = {Algebraic Operads},
  publisher = {Springer},
  year      = {2012},
  series    = {Grundlehren der mathematischen Wissenschaften},
  volume    = {346},
  address   = {Berlin, Heidelberg},
  isbn      = {978-3-642-30361-6} 
}

@book{Giraudo,
  title={Nonsymmetric operads in combinatorics},
  author={Giraudo, Samuele},
  year={2018},
  publisher={Springer International Publishing},
  volume={193},
  series={Lecture Notes in Mathematics},
  isbn={978-3-030-02074-3}
}

@book{MarklShniderStasheff,
  title={Operads in algebra, topology and physics},
  author={Markl, Martin and Shnider, Steve and Stasheff, Jim},
  year={2002},
  publisher={American Mathematical Society},
  series={Mathematical Surveys and Monographs},
  volume={96},
}

@article{Moser,
  title={Three integrable Hamiltonian systems connected with isospectral deformations},
  author={Moser, J{\"u}rgen},
  journal={Advances in Mathematics},
  volume={16},
  number={2},
  pages={197--220},
  year={1975},
  publisher={Elsevier}
}

@article{Kulish,
  title={Factorization of the classical and the quantum S matrix and conservation laws},
  author={Kulish, P. P.},
  journal={Theoretical and Mathematical Physics},
  volume={26},
  pages={132--137},
  year={1976},
  publisher={Springer}
}

@book{Perelomov,
  title={Integrable systems of classical mechanics and Lie algebras. Vol. 1},
  author={Perelomov, A. M.},
  series={Birkh{\"a}user texts in physics},
  year={1990},
  publisher={Birkh{\"a}user},
}

@article{OlshanetskyPerelomov,
  title={Classical integrable finite-dimensional systems related to Lie algebras},
  author={Olshanetsky, M. A. and Perelomov, A. M.},
  journal={Physics Reports},
  volume={71},
  number={5},
  pages={313--400},
  year={1981},
  publisher={Elsevier}
}

@article{KKS,
  title={Hamiltonian group actions and dynamical systems of Calogero type},
  author={Kazhdan, David and Kostant, Bertram and Sternberg, Shlomo},
  journal={Communications on Pure and Applied Mathematics},
  volume={31},
  number={4},
  pages={481--507},
  year={1978},
  publisher={Wiley}
}

@book{Etingof,
  title={Calogero-Moser systems and representation theory},
  author={Etingof, Pavel I.},
  series={Zurich Lectures in Advanced Mathematics},
  year={2014},
  publisher={European Mathematical Society}
}

@article{Zamolodchikov77,
  title={Exact two-particle S-matrix of quantum sine-Gordon solitons},
  author={Zamolodchikov, Al B},
  journal={Communications in Mathematical Physics},
  volume={55},
  pages={183--186},
  year={1977},
  publisher={Springer}
}

@article{Zamolodchikovs,
  title={Factorized S-matrices in two dimensions as the exact solutions of certain relativistic quantum field theory models},
  author={Zamolodchikov, Alexander B and Zamolodchikov, Alexey B},
  journal={Annals of physics},
  volume={120},
  number={2},
  pages={253--291},
  year={1979},
  publisher={Elsevier}
}

@misc{Torielli,
  title={LonTI Lectures on Sine-Gordon and Thirring},
  author={Torrielli, Alessandro},
  year={2022},
  eprint={2211.01186},
  archivePrefix={arXiv},
  primaryClass={hep-th}
}

@incollection{Dorey,
  title={Exact S-matrices},
  author={Dorey, Patrick},
  booktitle={Conformal Field Theories and Integrable Models: Lectures Held at the E{\"o}tv{\"o}s Graduate Course, Budapest, Hungary, 13--18 August 1996},
  pages={85--125},
  year={2007},
  publisher={Springer}
}

@book{Arutyunov,
   author = {{Arutyunov}, Gleb},
    title = "{Elements of Classical and Quantum Integrable Systems}",
     year = 2019,
      doi = {10.1007/978-3-030-24198-8},  
}

@article{Bashk,
  author       = {Denis Bashkirov},
  title        = {Lattice operads and operad filtrations},
  year         = {2024},
  version      = {v3},
  eprint       = {2212.14833},
  archivePrefix= {arXiv},
  primaryClass = {math.RA},
  url          = {https://arxiv.org/abs/2212.14833},
  note         = {arXiv:2212.14833 [math.RA]}
}

@article{Giraudo2,
  title={Combinatorial operads from monoids},
  author={Giraudo, Samuele},
  journal={Journal of Algebraic Combinatorics},
  volume={41},
  pages={493--538},
  year={2015},
  publisher={Springer}
}

@article{Wahl,
  title={Framed discs operads and Batalin--Vilkovisky algebras},
  author={Salvatore, Paolo and Wahl, Nathalie},
  journal={Quarterly Journal of Mathematics},
  volume={54},
  number={2},
  pages={213--231},
  year={2003},
  publisher={OUP}
}

@book{Yau,
  title={Infinity operads and monoidal categories with group equivariance},
  author={Yau, Donald},
  year={2021},
  publisher={World Scientific}
}

@article{Reid,
  title={Tiling with similar polyominoes},
  author={Reid, Michael},
  journal={Journal of Recreational Mathematics},
  volume={31},
  number={1},
  pages={15--24},
  year={2003},
  publisher={BAYWOOD PUBLISHING COMPANY, INC.}
}

@article{Golomb,
  title={Tiling with polyominoes},
  author={Golomb, Solomon W},
  journal={Journal of Combinatorial Theory},
  volume={1},
  number={2},
  pages={280--296},
  year={1966},
  publisher={Academic Press}
}

@misc{Khovanova,
  title = {L-Reptiles},
  author = {Khovanova, Tanya},
  year = {2010},
  howpublished = {\url{https://blog.tanyakhovanova.com/2010/04/l-reptiles/}},
  note = {Accessed: 2023-12-19}
}

@article{Lambrechts,
  title={Associahedron, cyclohedron and permutohedron as compactifications of configuration spaces},
  author={Lambrechts, Pascal and Turchin, Victor and Voli{\'c}, Ismar},
  journal={Bulletin of the Belgian Mathematical Society-Simon Stevin},
  volume={17},
  number={2},
  pages={303--332},
  year={2010},
  publisher={The Belgian Mathematical Society}
}

@article{Olver,
  title={Moving frames},
  author={Olver, Peter J},
  journal={Journal of Symbolic Computation},
  volume={36},
  number={3-4},
  pages={501--512},
  year={2003},
  publisher={Elsevier}
}

@article{Voronov,
  title={The Swiss-cheese operad},
  author={Voronov, Alexander A},
  journal={Contemporary Mathematics},
  volume={239},
  pages={365--374},
  year={1999},
  publisher={Providence, RI: American Mathematical Society}
}

\end{document}